\documentclass[sigconf,nonacm]{acmart}

\usepackage{amsmath}
\usepackage{algorithmic}

\AtEndPreamble{	\theoremstyle{acmplain}
	
	\theoremstyle{acmdefinition}

}

\usepackage[linesnumbered,vlined,ruled]{algorithm2e}
\DontPrintSemicolon
\SetKwComment{note}{$\triangleright$ }{}
\SetFuncSty{textsc}
\SetCommentSty{textit}
\SetDataSty{texttt}
\SetKwInput{Initial}{Initial}
\SetKwInput{Parameter}{Param.}
\SetKwInput{Input}{Input}
\SetKwInput{Data}{Data}
\SetKwInput{Output}{Output}
\ResetInOut{Result}
\let\oldnl\nl
\newcommand{\nonl}{\renewcommand{\nl}{\let\nl\oldnl}}
\SetKw{KwAnd}{and}
\SetKw{KwOr}{or}
\SetKw{KwXor}{xor}
\SetKw{KwNot}{not}
\SetKw{Parallel}{parallel}
\SetKw{Return}{return}
\SetKw{Break}{break}
\SetKw{Continue}{continue}

\usepackage{xspace}
\usepackage{booktabs}
\usepackage{multirow}
\usepackage{thmtools}
\usepackage{cleveref}
\usepackage{subcaption}

\AtBeginDocument{	}

\DeclareMathOperator*{\argmin}{arg\,min}

\DeclareMathOperator{\Unfair}{Unfair}

\newcommand{\red}{\textcolor{red}{red}\xspace}
\newcommand{\blue}{\textcolor{blue}{blue}\xspace}

\newcommand{\OPT}{\text{OPT}\xspace}
\DeclareRobustCommand{\alg}{\textsc{Dense}\xspace}
\DeclareRobustCommand{\algfast}{\textsc{Blade}\xspace}

\DeclareRobustCommand{\blopt}{\textsc{Opt}\xspace}

\DeclareRobustCommand{\blgapgreedy}{\textsc{GapGreedy}\xspace}
\DeclareRobustCommand{\algslow}{\textsc{Blade-NoBatch}\xspace}
\DeclareRobustCommand{\blkatzmass}{\textsc{KatzMass}\xspace}

\DeclareRobustCommand{\blsamegroup}{\textsc{SameGroup}\xspace}
\DeclareRobustCommand{\blfairwalk}{\textsc{FairWalk}\xspace}
\DeclareRobustCommand{\blcrosswalk}{\textsc{CrossWalk}\xspace}
\DeclareRobustCommand{\blfairgd}{\textsc{FairGD}\xspace}
\DeclareRobustCommand{\bllfprn}{\textsc{LFPR-N}\xspace}
\DeclareRobustCommand{\bllfpru}{\textsc{LFPR-U}\xspace}

\crefname{assumption}{assumption}{assumptions}
\Crefname{assumption}{Assumption}{Assumptions}

\begin{document}

\title{Fair Top-k Katz Centrality via Graph Design}
\author{Ivan Qin}
\email{ivanqin@liverpool.ac.uk}
\affiliation{	\institution{University of Liverpool}
	\city{Liverpool}
	\country{UK}
}
\author{Prudence Wong}
\email{pwong@liverpool.ac.uk}
\affiliation{	\institution{University of Liverpool}
	\city{Liverpool}
	\country{UK}
}
\author{Lutz Oettershagen}
\email{lutz.oettershagen@liverpool.ac.uk}
\affiliation{	\institution{University of Liverpool}
	\city{Liverpool}
	\country{UK}
}

\begin{abstract}
	Centrality measures are widely used to rank nodes in networked data, but
	fairness interventions for graph centrality typically target global score mass
	or modify the centrality operator rather than controlling who appears in the
	displayed top-$k$ ranking. We study this top-$k$ setting for Katz centrality.
	Given a target group proportion, an admissible set of directed edge additions,
	and a fairness tolerance, the goal is to find the smallest edit set whose
	resulting Katz top-$k$ ranking satisfies the target representation constraint.
	We formalize this problem as Fair Top-$k$ Katz Centrality Design and show that
	the minimum-edit objective is strongly inapproximable, ruling out worst-case
	polynomial-time approximation guarantees unless $\mathrm{P}=\mathrm{NP}$. We then derive
	closed-form Katz sensitivity expressions showing that useful edits are
	boundary-driven: they must help promotable nodes outside the top-$k$ set
	overtake opposing nodes inside it. Based on this structure, we develop \algfast,
	a scalable boundary-link algorithm that avoids dense Katz-kernel maintenance by
	using score-based direct-target batches and warm-started Katz updates.
	Experiments on synthetic and real-world networks show that \algfast reaches the desired top-$k$ representation using far fewer edits than natural baselines, while scaling to large real-world graphs and preserving the original ranking structure.
\end{abstract}

\begin{CCSXML}
<ccs2012>
 <concept>
  <concept_id>10002951.10003260.10003282</concept_id>
  <concept_desc>Information systems~Social networks</concept_desc>
  <concept_significance>500</concept_significance>
 </concept>
 <concept>
  <concept_id>10003752.10010070.10010071</concept_id>
  <concept_desc>Theory of computation~Network optimization</concept_desc>
  <concept_significance>300</concept_significance>
 </concept>
</ccs2012>
\end{CCSXML}

\ccsdesc[500]{Information systems~Social networks}
\ccsdesc[300]{Theory of computation~Network optimization}

\keywords{Katz centrality, fairness, network design}

\maketitle

\section{Introduction}
\label{sec:introduction}

Centrality measures are fundamental tools for ranking nodes in networked data. They are used to identify influential users, important items, authoritative sources, or promising candidates for downstream decisions such as recommendation, outreach, moderation, and resource allocation~\cite{newman2018networks,borgatti2005centrality,freeman1978centrality,brin1998anatomy,gleich2015pagerank}. Katz centrality is a particularly prominent walk-based measure: it assigns importance to a node by aggregating the contributions of all walks ending at that node, while exponentially discounting longer walks~\cite{katz1953new,bonacich1987power,nathan2017dynamic}.

In many applications, however, centrality scores are not consumed as a
complete vector over all nodes. Instead, they are used to produce a short
list: the top users to contact, the top items to recommend, the top accounts
to audit, or the top candidates to display~\cite{oettershagen2022computing,
zhan2017identification,shi2019realtime,niu2012top}. Prior work has shown that
group size, homophily, group mixing, and systematic errors in observed edges
can substantially affect minority representation in centrality-based
rankings, including representation among the highest-ranked
nodes~\cite{karimi2018homophily,neuhauser2021simulating,
espin2022inequality,oliveira2022group,shen2025minority}.
This makes the composition of the displayed top-$k$ set a fairness object in
its own right. A group may receive substantial centrality mass across many
lower-ranked nodes while still being almost absent from the top-$k$ set.
We therefore study the realized group composition of the Katz top-$k$ set.

Crucially, top-$k$ representation is a boundary phenomenon: it changes only when nodes cross the selection cutoff. Improving aggregate group centrality is therefore in general not sufficient to guarantee a change in the displayed set. Our objective is to make the protected-group share among the $k$ highest-scoring nodes reach a prescribed target composition.

\textbf{Intervention model.}
We assume that the ranking mechanism is fixed and cannot be modified, while the underlying network can be influenced through admissible directed edge additions. Thus, we keep the Katz centrality rule fixed and intervene only on the graph. The admissible edge set is application-defined and specifies which interventions are allowed, while the number of added edges measures intervention cost. The central question is:
\emph{which minimum-size set of admissible graph edits makes the Katz top-$k$ set reach a prescribed target composition?}

\begin{figure}[t]
	\centering
	\includegraphics[width=0.9\columnwidth]{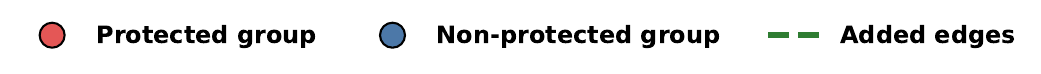}
	\includegraphics[width=\linewidth]{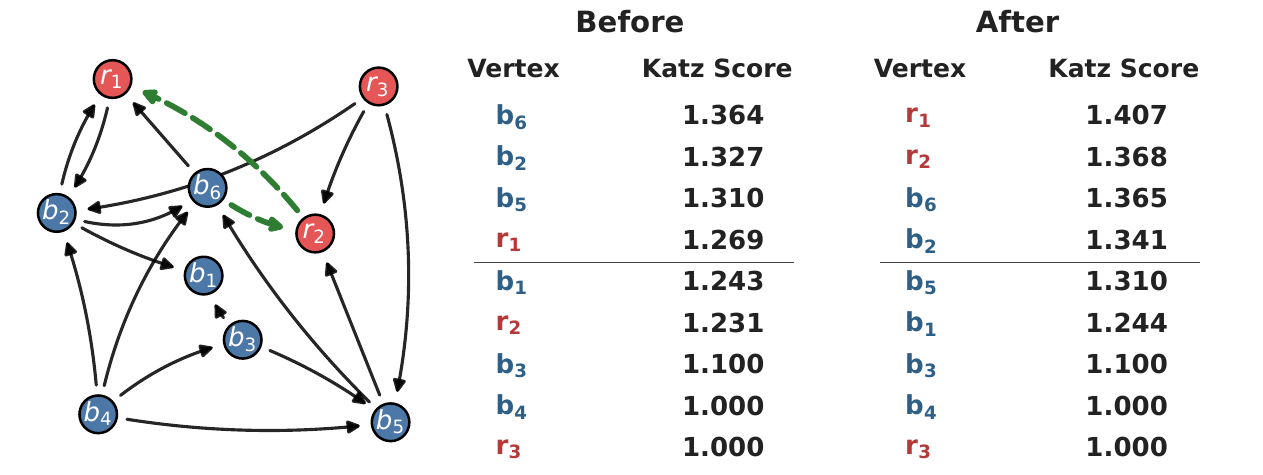}
	\caption{A toy example of the Fair Top-4 Katz Centrality Design Problem. Two edge additions (green dashed) are sufficient to reach a fair (balanced) top-4 group composition.}
	\label{fig:intro_example}
\end{figure}

Graph modifications have previously been studied as a means to improve the
centrality or rank of designated nodes or groups, as well as to improve
group-level information-flow fairness~\cite{bergamini2018improving,medya2018group,jalali2020information}.
Our objective differs in asking for the minimum number of admissible edits
needed to satisfy a prescribed group-composition constraint in the top-$k$ set induced by the fixed Katz centrality rule.

This problem arises naturally in networked ranking systems where
the ranking mechanism is fixed, but some connections in the underlying graph
can be influenced to improve group representation in the top-$k$. Examples
include creator discovery, marketplace seller ranking, scholarly search,
expert search, and professional networking. Concretely, a creator platform may
promote a top-$k$ set of accounts using follower or interaction-network signals,
while collaborations, cross-promotions, or recommended connections can create
new edges that help underrepresented creators enter the displayed set.
Similarly, a marketplace may rank sellers using transaction or trust-network
information while facilitating new buyer--seller interactions or partnerships
to improve the representation of a target seller group, and a scholarly search
system may rank papers or authors using citation-network features while
institutions can foster collaborations or dissemination links that improve the
top-$k$ representation of a target group of researchers.

\Cref{fig:intro_example} shows an example. Consider the directed network before adding the new green dashed edges.
The nodes are partitioned into a protected group, shown in red, and a
non-protected group, shown in blue. In the original graph, the top-4 nodes
ranked by Katz centrality are $b_6$, $b_2$, $b_5$, and $r_1$, so only
one protected node appears in the top-4 set. 
After adding the two new edges (green, dashed), the top-4 set becomes
$r_1$, $r_2$, $b_6$, and $b_2$. Thus, the protected count in the
top-4 increases from one to two, and the fair target composition is reached.
The example illustrates the boundary nature of the task: the intervention does
not need to redistribute Katz mass globally, but only to move the right nodes
across the top-$k$ cutoff.\looseness=-1

\textbf{Our approach.}
We formalize this task as the \emph{Fair Top-$k$ Katz Centrality Design
	Problem}. Given a directed graph, a set of admissible edge additions, a
top-$k$ size, a target group proportion, and a fairness tolerance, the
objective is to find the minimum-cardinality edit set whose induced Katz top-$k$
ranking satisfies the target tolerance.

Our technical approach exploits the algebraic structure of Katz centrality.
Since Katz centrality admits both a resolvent representation and a
walk-summation interpretation~\cite{katz1953new,bonacich1987power,benzi2015limiting,estrada2008communicability,nathan2017dynamic},
we can analyze how a single edge addition changes centrality scores and
ranking gaps. These sensitivity expressions allow us to reason directly about
the top-$k$ boundary: which promoted nodes outside the top-$k$ can overtake
which opposing nodes inside the top-$k$, and which admissible edits reduce the
corresponding score gaps.

Building on this analysis, we first develop a dense frontier-search reference algorithm which
implements the boundary principle directly: it evaluates exact pairwise Katz
differential gains and commits only edit prefixes that improve the top-$k$
objective or safely reduce a boundary gap. The dense algorithm is useful
as a principled reference method, but it also creates the main scalability
bottleneck: dense Katz-kernel information, repeated pair-specific gain
evaluations, tentative updates, and rollbacks are expensive on large graphs.
Motivated by these bottlenecks, we introduce \algfast{}, the main scalable
algorithm. \algfast{} preserves the same boundary-exchange logic while
replacing dense exact-gain computations with score-based direct-target edge
selection and warm-started iterative Katz updates.
\textbf{Our main contributions are:}
\begin{enumerate}
	\item We introduce the Fair Top-$k$ Katz Centrality Design Problem, a
	minimum-edit graph-design formulation for achieving target group
	representation in Katz top-$k$ rankings and prove strong inapproximability,
	showing that no polynomial-time constant-factor approximation is possible
	unless $\mathrm{P}$=$\mathrm{NP}$.
	
	\item We derive exact Katz sensitivity formulas for directed edge
	additions and show that top-$k$ representation changes are governed by
	boundary gaps between promotable nodes outside the top-$k$ set and
	opposing nodes inside it.
	
	\item We develop \algfast, a scalable boundary-link algorithm that uses
	score-based direct-target batches and warm-started Katz updates instead of
	dense Katz-kernel maintenance.
	
	\item We evaluate \algfast on synthetic and real-world networks, including
	million-node graphs, and show that it achieves the target top-$k$
	representation with few edits while preserving ranking stability.
\end{enumerate}

\noindent
\textbf{Reproducibility:} Code and data are available at
\url{https://github.com/Lutzoe/blade}.

\section{Related Work}

Fairness in rankings and networked systems can be pursued at several different
levels: by post-processing or constraining the ranking
output~\cite{zehlike2017,zehlike2022,celis2017ranking}, by modifying the
scoring or centrality mechanism or reweighting transitions or
edges~\cite{tsioutsiouliklis2021,wang2026fairness,khajehnejad2022crosswalk,
rahman2019fairwalk}, or by intervening directly on the graph
structure~\cite{tsioutsiouliklis2022,masrour2020,liu2024promoting,
jalali2020information,neuhauser2023improving}. Our work belongs to the last
category, but differs from prior graph interventions in targeting the discrete
group composition of a Katz top-$k$ ranking with a minimum-cardinality set of
 edge additions.

\textbf{Fair top-$k$ ranking.}
A line of work studies fairness in ranked lists without explicit graph
structure~\cite{zehlike2017,zehlike2022,celis2017ranking,biega2018equity,
singh2018fairness}. FA*IR and its extensions enforce group-fairness constraints
on prefixes of a top-$k$ ranking~\cite{zehlike2017,zehlike2022}, while related
work studies fair exposure, attention, or scoring rules for top-$k$
selection~\cite{celis2017ranking,biega2018equity,singh2018fairness,
cai2025finding,liu2024a}. These methods typically act on fixed score vectors,
learned scoring functions, or the ranking output itself. In contrast, our
ranking rule remains fixed and fairness is pursued by changing the graph
inducing the scores.

\textbf{Fairness in graph centrality and PageRank.}
Fairness for link-analysis and centrality measures has mainly been studied for
PageRank and related random-walk scores~\cite{tsioutsiouliklis2021,
tsioutsiouliklis2022,wang2026fairness,stoica2024fairness,saxena2024fairsna}.
Existing approaches modify the centrality operator or teleportation
distribution, enforce locally fair transitions, reweight existing edges, or
alter the graph itself through link recommendations or
rewiring~\cite{tsioutsiouliklis2021,tsioutsiouliklis2022,wang2026fairness,
liu2026efficient}. Related work also shows that homophily, group mixing,
observation bias, and network growth can affect minority representation among
highly ranked nodes~\cite{karimi2018homophily,neuhauser2021simulating,espin2022inequality,oliveira2022group,neuhauser2023improving,
shen2025minority}. Our objective instead concerns the discrete composition of
the top-$k$ set under Katz centrality and asks for the minimum number of
admissible edits needed to reach a prescribed target composition.

\textbf{Fair graph construction and enhancement.}
Graph-level interventions have also been used to improve fairness in downstream
tasks such as clustering, link prediction, recommendation, representation
learning, and information flow~\cite{kleindessner2020notion,masrour2020,
liu2024promoting,rahman2019fairwalk,khajehnejad2022crosswalk,jalali2020information,jalali2023fairness}. A related network-design
literature studies graph modifications that improve the centrality or ranking
position of designated nodes or groups~\cite{bergamini2018improving,medya2018group}. These methods demonstrate that
structural changes can alter centrality or mitigate unfairness, but their
objectives differ from achieving a prescribed group composition in a Katz centrality-based top-$k$ set.

\section{Preliminaries and Problem Definition}

Let $G=(V,E)$ be a directed graph with $n=|V|$ nodes.
We assume the nodes are partitioned into two groups $V_r$ (red nodes) and $V_b$ (blue nodes).
We treat $V_r$ as the protected group whose top-$k$ representation is being monitored, 
and $V_b=V\setminus V_r$ is the complementary group.
Let $A \in \mathbb{R}^{n \times n}$ denote the adjacency matrix of $G$.
We write $\mathrm{spec}(A)$ for the \emph{spectrum} of $A$, i.e., the set of its eigenvalues, and define
$
\rho(A) := \max\{|\lambda| : \lambda \in \mathrm{spec}(A)\}
$
as the \emph{spectral radius} of $A$, which is the largest absolute value among the eigenvalues of $A$.\looseness=-1

\begin{definition}
	For a parameter $\alpha \in (0, 1/\rho(A))$, the \emph{Katz kernel} of $G$ is
	$
	K_G = (I - \alpha A)^{-1}.
	$
	Equivalently, it admits the convergent Neumann-series expansion
	\[
	K_G
	=
	\sum_{k=0}^{\infty} \alpha^k A^k,
	\qquad
	\text{for } \alpha < 1/\rho(A).
	\]
\end{definition}

The $(v,a)$ entry of $K_G$ aggregates all directed walks from $v$ to $a$,
with walks of length $k$ discounted by $\alpha^k$.
Thus Katz centrality captures influence propagation as the accumulation of attenuated walks of increasing length~\cite{katz1953new}.
The \emph{Katz centrality score} for each node $a\in V$ is then
\[
s_G(a) := \sum_{v\in V} K_G(v,a).
\]
Equivalently, let $s_G := K_G^\top \mathbf{1}\in\mathbb{R}^n$ denote the vector of Katz centrality scores, so that $s_G(a)=[\,s_G\,]_a$.
Let $T_k(G)\subseteq V$ denote the set of nodes with the $k$ highest centrality scores.\footnote{We define $T_k(G)$ as the set of exactly $k$ nodes with the highest Katz centrality scores. To ensure $|T_k(G)| = k$, we assume that ties in centrality scores are broken deterministically. If equal Katz scores give multiple valid top-$k$ sets, we choose one
	that minimizes $\Unfair(T_k(G))$; any remaining ties are broken by a
	fixed node ordering.}
Let $p = |T_k(G) \cap V_r| / k$ denote the realized proportion of the protected group in the top-$k$, and let $\pi \in [0,1]$ denote the target proportion.
We measure unfairness as the squared deviation from this target
\[
\mathrm{Unfair}(T_k(G))
:=
(p - \pi)^2.
\]
This objective is zero exactly when the target representation is achieved, is symmetric for over- and under-representation, and penalizes larger deviations more strongly. 

\subsection{Graph Design Model}
We focus on unit-cost directed edge additions. Let
$\mathcal{A} \subseteq (V\times V)\setminus E$ denote the set
of admissible new directed edges. A design
$\Delta \subseteq \mathcal{A}$ is a set of added edges, and we
write
\[
G \oplus \Delta := (V, E \cup \Delta).
\]
Given a target tolerance $\varepsilon \in [0,1)$, our goal is to reach
the desired top-$k$ fairness level using as few edge additions as
possible:
\begin{equation}
	\label{eq:design-unit}
	\OPT_\varepsilon
	:=
	\min_{\Delta \subseteq \mathcal{A}} \ |\Delta|
	\quad \text{s.t.}\quad
	\Unfair\bigl(T_k(G \oplus \Delta)\bigr) \le \varepsilon .
\end{equation}
We refer to
\eqref{eq:design-unit} as the \emph{Fair Top-$k$ Katz Centrality
	Design Problem}.

The Neumann-series interpretation of Katz centrality requires $\alpha<1/\rho(A)$. 
In ranking applications, larger or near-critical Katz parameters can be used as empirical ranking parameters; see \cite{aprahamian2016matching} for a discussion.
In the following, we work in the Katz-valid regime throughout the
edit process and assume that
\[
\alpha < \frac{1}{\rho(A_{G\oplus\Delta})}
\]
for every edit set $\Delta \subseteq \mathcal{A}$ considered by the algorithm.
This ensures that every Katz kernel used below admits the
nonnegative Neumann-series representation.

\subsection{Hardness}\label{sec:hardness}

We first show that the target-cost formulation is not only hard to
solve exactly, but also hard to approximate. Let $\OPT_\varepsilon=\infty$ if no feasible edit set exists.

\begin{theorem}\label{thm:dual-inapprox-katz}
	For every constant $\psi \ge 1$, the Fair Top-$k$ Katz Centrality
	Design Problem admits no polynomial-time $\psi$-approximation unless
	$\mathrm{P}=\mathrm{NP}$, even when $\alpha=\tfrac12$ and the input
	graph is a directed acyclic graph, i.e., no polynomial-time
	algorithm can always return a feasible edit set $\Delta$ satisfying
	$
	|\Delta|\le \psi\,\OPT_\varepsilon .
	$
\end{theorem}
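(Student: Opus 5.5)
The strategy is a gap reduction: from an NP-complete problem we build instances in which deciding whether any feasible edit set exists, i.e. whether $\OPT_\varepsilon<\infty$, is NP-hard. A $\psi$-approximation must return a feasible $\Delta$ whenever one exists and can never return one otherwise. Its output therefore decides feasibility, whatever the value of $\psi$, and this rules out every constant factor at once. For the source problem we take 3-SAT, or Exact Cover by 3-Sets if it makes the counting cleaner. The admissible set $\mathcal{A}$ will encode choices. The constraint $\Unfair(T_k(G\oplus\Delta))\le\varepsilon$, with $\varepsilon$ chosen so that it forces one exact protected count in the top-$k$, will encode satisfaction.

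The construction, for a formula with variables $x_1,\dots,x_m$ and clauses $C_1,\dots,C_q$, is a DAG built as follows. For each variable there are two literal source nodes, with admissible edges from literal nodes to the clause gadgets they appear in. A strictly layered orientation, sources to literals to clauses to top, keeps every $G\oplus\Delta$ acyclic. Then $A$ is nilpotent, so $\rho=0$ and $\alpha=\tfrac12$ is Katz-valid for every design. Katz scores are then finite walk sums with weights $2^{-\ell}$, which makes exact score computation elementary.

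For each clause there is a red node $r_j$ whose score must rise above a blue threshold node inside the top-$k$. We set $k$ and $\pi$, with $\varepsilon=0$, so that feasibility requires every $r_j$ to enter the top-$k$ simultaneously. Each $r_j$ gets a large fixed base score from pendant chains and sits just below the cutoff. It crosses the cutoff if and only if it receives an added in-edge from a literal node of weight at least a fixed amount.

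Consistency is the part that prevents choosing both $x_i$ and $\bar x_i$. A literal node's score is amplified only if it has itself received an in-edge from a variable hub $h_i$, and the admissible set contains both $(h_i,x_i)$ and $(h_i,\bar x_i)$. Choosing both must be punished. One way is that the second hub edge boosts a blue node into the top-$k$, or else pushes a red guard node out of the top-$k$, breaking the exact count. Weights come from chain lengths and multiplicities, tuned so that a boosted literal's contribution strictly exceeds the gap while an unboosted one falls strictly short.

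The main obstacle is making these score inequalities exact. Adding an edge into $x_i$ raises the scores of every downstream node, so boosts aggregate over all walks. A red clause node reached through several literals must not be overcounted in a way that breaks the guard conditions. Ties must also be resolved correctly given the footnote's tie-breaking rule, which favors fair top-$k$ sets. My first attempt would give the top-$k$ boundary a large margin of separation, by attaching to blue boundary nodes scores at distinct dyadic levels. Every possible design then yields a score profile predictable up to a small error, and correctness of both directions of the reduction reduces to checking a few inequalities among powers of $2$.

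If the consistency gadget proves too delicate, a fallback is to reduce instead from Set Cover or Exact Cover, where only the "at least one" structure matters. Feasibility is then generally easy, so this would need a budget. The gap would then come from the known inapproximability of Set Cover, giving only $\Omega(\log n)$ rather than the feasibility-based argument. For the constant-$\psi$ claim, the feasibility-based reduction is the clean route, and I would pursue it first.
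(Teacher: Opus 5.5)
\paragraph{Verdict.} Your top-level logic is sound and is a different route from the paper's, but the proposal stops before the gadget that carries the whole reduction, and the natural way to build that gadget fails.

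\paragraph{What works and how it compares.} If deciding $\OPT_\varepsilon<\infty$ is NP-hard, you can run a $\psi$-approximation and check whether its output is feasible (exact dyadic arithmetic on a DAG with $\alpha=\tfrac12$). That decides feasibility for every $\psi$ at once, with no $\psi$-dependent parameter. The paper instead reduces from \textsc{Independent Set} on $3$-regular graphs and makes every instance feasible through fallback red nodes that each need $Q$ admissible in-edges. It then obtains the cost gap $\ell$ versus $Q>\psi\ell$, which shows hardness even when a feasible design is trivially available. It also avoids multi-contribution effects, because each red candidate $r_i$ has exactly one admissible in-edge. Conflicts are detected by $\ell$ blue clones per edge that outrank every selected candidate.

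\paragraph{Where the consistency gadget breaks.} A blue conflict node $c_i$ fed by $x_i$ and $\bar x_i$ changes the red count only if it outranks some red clause node; displacing a blue threshold node changes nothing. Let a boosted literal score $L$ and an unboosted one score $1$. Let literal-to-clause and literal-to-conflict edges both be direct, with weight $\alpha=\tfrac12$.
\begin{itemize}
\item Consider the cheating design that boosts both literals of every variable and adds all literal-to-clause edges. Every clause node then scores $B_c+\tfrac{3L}{2}$ and every conflict node scores $B_x+L$.
\item Ruling this design out forces $B_x+L>B_c+\tfrac{3L}{2}$, i.e.\ $B_x>B_c+\tfrac L2$. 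The inequality must be strict, since the paper's tie-breaking rule favours fair top-$k$ sets.
\item But then every conflict node scores at least $B_x+1>B_c+\tfrac L2+1$. That is strictly above any clause node with a single boosted literal.
\item So any satisfiable formula whose satisfying assignments all leave some clause with exactly one true literal is mapped to an infeasible instance.
\end{itemize}
This is exactly the overcounting you flagged. Placing boundary nodes at distinct dyadic levels does not address it.

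\paragraph{What a working version needs.}
\begin{itemize}
\item An explicit sensitivity asymmetry: the conflict gadget's response to one boosted literal must exceed twice the clause gadget's. For example, put two intermediate layers on the literal-to-clause side (attenuation $\alpha^3$) and keep literal-to-conflict edges direct.
\item The resulting nonempty window for $B_x$.
\item A clause threshold lying strictly between the zero-boosted and one-boosted clause scores.
\item A guarantee that hubs and boosted literals, whose number depends on the design, stay strictly below every satisfied clause node. Taking $k=q$, $\pi=1$, $\varepsilon=0$ is one way to arrange this.
\end{itemize}
Once these inequalities are written down and checked in both directions, your route should go through. As it stands, the reduction is not established.
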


\begin{proof}[Proof idea]
	The proof is by a gap-preserving reduction from
	\textsc{Independent Set} on $3$-regular graphs. 
	Given an instance
	$(H,\ell)$, we construct a Fair Top-$k$ Katz Centrality Design
	instance such that, in the YES case, the fairness target can be
	reached using at most $\ell$ edge additions, whereas in the NO case
	every feasible solution requires at least $Q$ edge additions, for an
	integer $Q>\psi\ell$. Therefore, a polynomial-time
	$\psi$-approximation would distinguish the two cases and solve
	\textsc{Independent Set} in polynomial time.
	The full proof is provided in \Cref{app:proofs}.
\end{proof}

The preceding theorem immediately rules out an exact polynomial-time
algorithm. It also implies hardness of the associated threshold decision
problem.

\begin{corollary}\label{cor:decision-nphard-katz}
	The following decision problem is NP-hard: given an instance,
	a tolerance $\varepsilon$, and an integer threshold $C$, decide
	whether there exists an edit set $\Delta\subseteq\mathcal A$ with
	$|\Delta|\le C$ such that
	$
	\Unfair\bigl(T_k(G\oplus\Delta)\bigr)\le \varepsilon .
	$
\end{corollary}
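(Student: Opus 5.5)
The corollary should follow from \Cref{thm:dual-inapprox-katz}, or more directly from the gap-preserving reduction in its proof. I plan to reduce \textsc{Independent Set} on $3$-regular graphs to the threshold decision problem. Given an instance $(H,\ell)$, I will build in polynomial time the same Fair Top-$k$ Katz Centrality Design instance as in the proof of \Cref{thm:dual-inapprox-katz}: a DAG, $\alpha=\tfrac12$, an admissible set $\mathcal A$, and parameters $k,\pi,\varepsilon$. I then set the threshold to $C:=\ell$.

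For correctness I need two implications. In the YES case, the proof of the theorem provides an edit set $\Delta\subseteq\mathcal A$ with $|\Delta|\le\ell$ and $\Unfair(T_k(G\oplus\Delta))\le\varepsilon$. The decision instance is therefore a YES instance.

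In the NO case, every feasible edit set has size at least $Q>\psi\ell\ge\ell$, and possibly none exists, so $\OPT_\varepsilon=\infty$. No $\Delta$ with $|\Delta|\le C$ can then satisfy the constraint, and the decision instance is a NO instance. Applying this with $\psi=1$ already gives $Q>\ell$, so the size of the gap is irrelevant here.

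Finally I need the construction to be computable in polynomial time with polynomially-encoded numbers. The reduction in the theorem must already guarantee this, since otherwise it would not rule out polynomial-time approximation. Because \textsc{Independent Set} on $3$-regular graphs is NP-hard, the threshold problem is NP-hard.

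An alternative route is a direct contrapositive: a polynomial-time decider for the threshold problem, combined with binary search over $C\in\{0,\dots,|\mathcal A|\}$, gives $\OPT_\varepsilon$ exactly. The same decider with $C=|\mathcal A|$ gives feasibility. A self-reduction would then recover an optimal $\Delta$ by deleting candidate edges from $\mathcal A$ one at a time and re-querying. That would be a $1$-approximation, contradicting the theorem unless $\mathrm{P}=\mathrm{NP}$. I prefer the direct many-one reduction, because this alternative only yields Turing-style hardness.

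The only step needing care is to check that the theorem's proof really establishes the YES bound $\OPT_\varepsilon\le\ell$, and not merely a $\psi$-dependent gap. The proof idea stated above asserts exactly this bound, so I expect no real obstacle.
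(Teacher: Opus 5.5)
Your proposal is correct and follows the paper's own argument: both reuse the gap construction from the proof of \Cref{thm:dual-inapprox-katz}, in which YES instances satisfy $\OPT_\varepsilon\le\ell$ and NO instances satisfy $\OPT_\varepsilon\ge Q>\ell$, so taking the threshold $C=\ell$ gives a many-one reduction from \textsc{Independent Set} on $3$-regular graphs. The paper also argues membership in NP, assuming rational parameters so that Katz scores can be computed exactly; this is not needed for the hardness claim, so leaving it out is fine.
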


\section{Structural Properties of the Katz Kernel}
\label{sec:structural}

We derive structural properties of the Katz kernel that underpin both
algorithms developed in~\Cref{sec:alg,sec:fast}.
\begin{proposition}
	\label{prop:monotone}
	Let $G' = G \oplus (u \to v)$ be obtained by adding a directed edge.
	If $\alpha < 1/\rho(A_{G'})$, then
	\[
	K_{G'}(a,b) \ge K_G(a,b)
	\quad
	\text{for all } a,b \in V.
	\]
\end{proposition}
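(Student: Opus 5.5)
The plan is to compare the two Neumann series term by term. Write $A' = A_{G'} = A + e_u e_v^\top$. Since $A'$ is obtained from $A$ by turning one zero entry into a one, we have $A' \ge A$ entrywise. Both matrices are nonnegative.

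First, I need the series for $K_G$ to converge whenever the one for $K_{G'}$ does. For nonnegative matrices, the Perron--Frobenius monotonicity of the spectral radius gives that $0 \le A \le A'$ entrywise implies $\rho(A) \le \rho(A')$. The hypothesis $\alpha < 1/\rho(A')$ therefore also gives $\alpha < 1/\rho(A)$. So both kernels are well defined and equal their convergent Neumann series $K_G=\sum_{j\ge0}\alpha^j A^j$ and $K_{G'}=\sum_{j\ge0}\alpha^j A'^j$.

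Next, I show $A'^j \ge A^j$ entrywise for every $j\ge 0$, by induction. The base case $j=0$ is trivial. For the step, use $A'^{j+1} = A' A'^j \ge A A'^j \ge A A^j$. This uses only that products of entrywise-ordered nonnegative matrices preserve the order, since all entries are nonnegative. Combinatorially, this just says every walk in $G$ is also a walk in $G'$.

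Since $\alpha>0$, summing the inequalities termwise yields $K_{G'}(a,b) \ge K_G(a,b)$ for all $a,b$. Termwise comparison of convergent series of nonnegative terms is legitimate.

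An alternative, sharper route uses Sherman--Morrison. It gives $K_{G'} = K_G + \frac{\alpha K_G e_u e_v^\top K_G}{1-\alpha K_G(v,u)}$. Nonnegativity of the increment would then need $1-\alpha K_G(v,u) > 0$, which itself follows from invertibility together with the spectral condition, so that route is more delicate. I will stick to the series argument.

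The main obstacle is the step justifying convergence for $G$, i.e., $\rho(A)\le\rho(A')$. If one prefers to avoid citing Perron--Frobenius monotonicity, there is a direct argument. Entrywise domination gives $\|A^j\| \le \|A'^j\|$ in any monotone norm such as $\|\cdot\|_\infty$. Gelfand's formula $\rho(M)=\lim_j \|M^j\|^{1/j}$ then gives $\rho(A)\le\rho(A')$.
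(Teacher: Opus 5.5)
Your proof is correct and follows essentially the same route as the paper's proof: entrywise domination $A_{G'} \ge A_G$, monotonicity of the spectral radius so that both Neumann series converge, the induction $A_{G'}^{j} \ge A_G^{j}$ via $A_{G'}A_{G'}^{j} \ge A_G A_{G'}^{j} \ge A_G A_G^{j}$, and termwise summation with $\alpha > 0$. Your Gelfand-formula justification of $\rho(A_G) \le \rho(A_{G'})$ is a self-contained supplement to a step the paper simply asserts.
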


The rank-one structure of directed edge additions yields closed-form
marginal updates under a single edit and lets us score candidate edges
without recomputing $(I-\alpha A)^{-1}$ from scratch.

\begin{lemma}
	\label{lem:katz-sensitivity}
	Let $K_G = (I - \alpha A_G)^{-1}$ and let
	$G' = G \oplus (u \to v)$. Then
	\begin{equation}
		\label{eq:katz-update}
		K_{G'}
		=
		K_G
		+
		\frac{\alpha}{1 - \alpha K_G(v,u)}
		\mathbf{k}_{\cdot u}\mathbf{k}_{v\cdot},
	\end{equation}
	where $\mathbf{k}_{\cdot u}=K_G\mathbf{e}_u$ is the $u$-th column of
	$K_G$ and $\mathbf{k}_{v\cdot}=\mathbf{e}_v^\top K_G$ is the $v$-th
	row of $K_G$. Moreover, the centrality change at node $a$ from
	adding edge $u\to v$ is
	\begin{equation}
		\label{eq:score-change}
		\Delta s(a\mid u\to v)
		=
		\frac{\alpha\, s_G(u)\, K_G(v,a)}
		{1-\alpha K_G(v,u)} .
	\end{equation}
\end{lemma}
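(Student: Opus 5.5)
The plan is to treat the edge addition as a rank-one perturbation of $I-\alpha A_G$ and apply the Sherman--Morrison formula. Adding $u\to v$ changes the adjacency matrix to $A_{G'} = A_G + \mathbf{e}_u\mathbf{e}_v^\top$, since by the walk interpretation $K_G(x,y)$ counts walks from $x$ to $y$, so $A(u,v)=1$ encodes the edge $u\to v$. Hence
\[
I-\alpha A_{G'} = (I-\alpha A_G) - \alpha\,\mathbf{e}_u\mathbf{e}_v^\top .
\]
Sherman--Morrison with $M=I-\alpha A_G$ (invertible since $\alpha<1/\rho(A_G)$, which follows from $\alpha<1/\rho(A_{G'})$ and $\rho(A_G)\le\rho(A_{G'})$ by Perron--Frobenius monotonicity for nonnegative matrices) gives
\[
K_{G'} = K_G + \frac{\alpha\, K_G\mathbf{e}_u\mathbf{e}_v^\top K_G}{1-\alpha\,\mathbf{e}_v^\top K_G\mathbf{e}_u},
\]
provided the denominator $1-\alpha K_G(v,u)$ is nonzero. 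This is exactly \eqref{eq:katz-update}, because $K_G\mathbf{e}_u=\mathbf{k}_{\cdot u}$ and $\mathbf{e}_v^\top K_G=\mathbf{k}_{v\cdot}$. One can also verify the identity directly by multiplying the proposed $K_{G'}$ by $I-\alpha A_{G'}$ and checking that the product is $I$.

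The main obstacle is showing $1-\alpha K_G(v,u)\neq 0$; in fact I will show it is positive. Since $I-\alpha A_{G'}$ is invertible, the matrix determinant lemma gives
\[
\det(I-\alpha A_{G'})=\det(M)\,\bigl(1-\alpha K_G(v,u)\bigr),
\]
so the factor is nonzero. For positivity I plan a continuity argument. Consider the path $t\mapsto M-t\alpha\,\mathbf{e}_u\mathbf{e}_v^\top$ for $t\in[0,1]$. Every intermediate matrix $A_G+t\mathbf{e}_u\mathbf{e}_v^\top$ is entrywise between $A_G$ and $A_{G'}$, so it has spectral radius at most $\rho(A_{G'})<1/\alpha$. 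The determinant therefore never vanishes along the path, and the factor $1-t\alpha K_G(v,u)$ is $1$ at $t=0$ and never zero, so it stays positive. As an alternative, positivity follows from the Neumann series: $K_{G'}(v,u)=K_G(v,u)/(1-\alpha K_G(v,u))$, and the left side is finite and nonnegative.

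For \eqref{eq:score-change}, I take $s_{G'}(a)=\sum_x K_{G'}(x,a)=\mathbf{1}^\top K_{G'}\mathbf{e}_a$ and substitute the update formula. The correction term contributes
\[
\frac{\alpha\,(\mathbf{1}^\top K_G\mathbf{e}_u)(\mathbf{e}_v^\top K_G\mathbf{e}_a)}{1-\alpha K_G(v,u)}.
\]
Here $\mathbf{1}^\top K_G\mathbf{e}_u=s_G(u)$ by the definition $s_G=K_G^\top\mathbf{1}$, and $\mathbf{e}_v^\top K_G\mathbf{e}_a=K_G(v,a)$. This gives the claimed expression.
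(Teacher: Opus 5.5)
Your proof is correct and follows the paper's route: Sherman--Morrison with $M=I-\alpha A_G$ and the rank-one perturbation $\alpha\mathbf{e}_u\mathbf{e}_v^\top$, then contracting the correction term with $\mathbf{1}^\top$ and $\mathbf{e}_a$ to obtain \eqref{eq:score-change}. The only difference is how you get the sign of $1-\alpha K_G(v,u)$. The paper deduces it from Proposition~\ref{prop:monotone} applied to the already-derived update formula, while your determinant-lemma and continuity argument also shows the denominator is nonzero before Sherman--Morrison is invoked.
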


The boost to $a$ is the product of three factors:
(i) $s_G(u)$, the importance of the tail node;
(ii) $K_G(v,a)$, the kernel proximity from the head to the target; and
(iii) $(1-\alpha K_G(v,u))^{-1}$, a feedback amplification term that is
large when $v$ already has high influence on $u$. If our goal is to
promote node $a$ in the ranking, it is not enough that $a$'s score
increases; it must increase \emph{more than} the score of any node it
needs to overtake. This motivates working with differential rather than
absolute gains.

\begin{definition}
	\label{def:diff-gain}
	For a candidate pair $(a,b)$ and directed edge $e=(u\to v)$, the
	\emph{differential gain} is
	\begin{equation}
		\delta_G(a,b\mid u\to v)
		:=
		\Delta s(a\mid u\to v)-\Delta s(b\mid u\to v).
	\end{equation}
	The \emph{nonnegative score gap} is
	\[
	\Gamma_G(a,b):=\max\{0, s_G(b)-s_G(a)\}.
	\]
\end{definition}

After adding edge $e$, the score difference becomes
\[
s_{G\oplus e}(b)-s_{G\oplus e}(a)
=
(s_G(b)-s_G(a))-\delta_G(a,b\mid e).
\]
Thus, when $\Gamma_G(a,b)>0$, a positive differential gain reduces the
amount by which $a$ trails $b$; accumulated gain at least
$\Gamma_G(a,b)$ closes the current pairwise gap.

\begin{proposition}
	\label{prop:factorization}
	The following factorization holds:
	\begin{equation}
		\label{eq:factored}
		\delta_G(a,b\mid u\to v)
		=
		\underbrace{
			\frac{\alpha\,s_G(u)}
			{1-\alpha K_G(v,u)}
		}_{\displaystyle \lambda_G(u,v)>0}
		\cdot
		\underbrace{
			\bigl(K_G(v,a)-K_G(v,b)\bigr)
		}_{\displaystyle \mu_G(v;\,a,b)} .
	\end{equation}
	Hence the differential gain factors into a positive edge-dependent
	scalar $\lambda_G(u,v)$ and a \emph{kernel gap}
	$\mu_G(v;\,a,b)$ that depends only on the head node $v$ and the
	target pair.
\end{proposition}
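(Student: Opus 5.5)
The factorization follows by subtracting two instances of the score-change formula in \Cref{lem:katz-sensitivity}. The only real work is checking that $\lambda_G(u,v)$ is strictly positive.

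First, apply \eqref{eq:score-change} with target $a$ and again with target $b$, both for the same edge $u\to v$:
\[
\Delta s(a\mid u\to v)-\Delta s(b\mid u\to v)
=\frac{\alpha\,s_G(u)}{1-\alpha K_G(v,u)}\bigl(K_G(v,a)-K_G(v,b)\bigr).
\]
This holds because the prefactor $\alpha s_G(u)/(1-\alpha K_G(v,u))$ does not depend on the target node. By \Cref{def:diff-gain}, the left-hand side is $\delta_G(a,b\mid u\to v)$, which gives \eqref{eq:factored}. By inspection, $\lambda_G(u,v)$ depends only on the edge. The kernel gap $\mu_G(v;a,b)$ involves only row $v$ of $K_G$, so it depends on the head $v$ and the pair $(a,b)$ but not on $u$.

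Second, we show $\lambda_G(u,v)>0$, starting with the numerator. Under the standing assumption $\alpha<1/\rho(A_G)$, the Neumann series gives $K_G=\sum_k\alpha^kA^k$ entrywise nonnegative with $K_G(u,u)\ge1$. Hence
\[
s_G(u)=\sum_w K_G(w,u)\ge 1>0,
\]
and together with $\alpha>0$ the numerator is strictly positive.

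The main obstacle is the sign of the denominator, $1-\alpha K_G(v,u)>0$. For this I use the assumption that $G'=G\oplus(u\to v)$ is also Katz-valid, i.e.\ $\alpha<1/\rho(A_{G'})$, so that $K_{G'}$ exists and is entrywise nonnegative.

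Write $A_{G'}=A_G+\mathbf e_u\mathbf e_v^\top$. Then $I-\alpha A_{G'}=(I-\alpha A_G)-\alpha\mathbf e_u\mathbf e_v^\top$. The matrix determinant lemma gives
\[
\det(I-\alpha A_{G'})=\det(I-\alpha A_G)\,\bigl(1-\alpha\,\mathbf e_v^\top K_G\mathbf e_u\bigr)=\det(I-\alpha A_G)\,\bigl(1-\alpha K_G(v,u)\bigr).
\]
Consider the path $t\mapsto I-t\alpha A_{G'}$ for $t\in[0,1]$. Since $\rho(tA_{G'})<1/\alpha$, none of these matrices is singular, so the determinant stays positive along the path; at $t=1$ this gives $\det(I-\alpha A_{G'})>0$. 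The same argument applied to $A_G$ gives $\det(I-\alpha A_G)>0$. Therefore $1-\alpha K_G(v,u)>0$.

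A more elementary alternative avoids determinants. Use the Sherman--Morrison expression \eqref{eq:katz-update} and read off the $(u,u)$ entry: $K_{G'}(u,u)=K_G(u,u)+\alpha K_G(u,u)K_G(v,u)/(1-\alpha K_G(v,u))$. This entry must be finite, and nonnegativity of $K_{G'}$ rules out a negative denominator whenever $K_G(v,u)>0$. If $K_G(v,u)=0$, the denominator equals $1$ trivially.

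Combining the two steps, $\lambda_G(u,v)>0$, which completes the factorization.
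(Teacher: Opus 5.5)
Your proof is correct, and the factorization step is the same as the paper's: subtract two instances of \eqref{eq:score-change} and pull out the target-independent prefactor. You differ in how you certify $1-\alpha K_G(v,u)>0$.

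The paper's proof of this proposition only cites Katz validity. It relies on the argument inside the proof of \Cref{lem:katz-sensitivity}: the update $K_{G'}-K_G$ is entrywise nonnegative (\Cref{prop:monotone}), and it equals $\alpha/(1-\alpha K_G(v,u))$ times a nonnegative rank-one matrix with a strictly positive entry. So that scalar, and hence the denominator, is positive.

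You instead use the matrix determinant lemma, $\det(I-\alpha A_{G'})=\det(I-\alpha A_G)\,(1-\alpha K_G(v,u))$. You combine it with the fact that $\det(I-t\alpha A)$ cannot vanish, and so cannot change sign, on $t\in[0,1]$ when $\alpha\rho(A)<1$.
\begin{itemize}
\item Your route is self-contained, since it does not need the monotonicity lemma.
\item It makes the degenerate case transparent: the denominator vanishes exactly when $I-\alpha A_{G'}$ is singular. The paper's argument tacitly presupposes a nonzero denominator in order to write down Sherman--Morrison at all.
\item The paper's route is shorter, given the lemmas already proved.
\end{itemize}

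Your second, more elementary sketch is closest to the paper's idea, but two points need tightening.
\begin{itemize}
\item The $(u,u)$ entry simplifies to $K_{G'}(u,u)=K_G(u,u)/(1-\alpha K_G(v,u))$. Since $K_G(u,u)\ge 1$ and $K_{G'}(u,u)\ge 1$, this immediately forces a positive denominator. As you wrote it, it is not evident why a negative correction term must outweigh $K_G(u,u)$.
\item Replace the ``must be finite'' remark by an appeal to invertibility of $I-\alpha A_{G'}$. When the denominator is zero, Sherman--Morrison is not infinite; it is simply inapplicable.
\end{itemize}
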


The factorization clarifies the distinct roles of head and tail. The
sign of $\delta_G$ is determined entirely by the head $v$ through
$\mu_G(v;\,a,b)$: an edge helps pair $(a,b)$ if and only if the total
weight of $\alpha$-discounted walks from $v$ to $a$ is larger than that
from $v$ to $b$. The
magnitude is controlled by $\lambda_G(u,v)$, which depends on the tail
$u$ through $s_G(u)$ and on the feedback amplification through
$(1-\alpha K_G(v,u))^{-1}$. Consequently,
\[
\delta_G(a,b\mid u\to v)>0
\quad\Longleftrightarrow\quad
K_G(v,a)>K_G(v,b).
\]
The tail node $u$ affects magnitude but not sign.

\section{Frontier-Guided Algorithms}
\label{sec:algorithms}
By \Cref{thm:dual-inapprox-katz}, no polynomial-time algorithm can provide approximation guarantees for the  Fair Top-$k$ Katz Centrality
Design Problem unless $\mathrm{P}=\mathrm{NP}$. We therefore target practical efficiency rather than worst-case certificates and 
develop frontier-guided algorithms.
The key observation is that the fairness objective
changes only when nodes cross the boundary of the top-$k$ set. Thus, rather
than trying to reshape all Katz scores globally, we focus on pairs of nodes
whose relative order can change the group composition of the top-$k$ set.
Let $H$ be the current graph and let
\[
h(H):=|T_k(H)\cap V_r|
\]
denote the number of protected nodes in the top-$k$. We measure progress by
the per-count objective
\[
\Phi(c):=\Bigl(\frac{c}{k}-\pi\Bigr)^2.
\]
The target is reached exactly when
\[
\Phi(h(H))\le \varepsilon,
\qquad\text{equivalently,}\qquad
\Unfair(T_k(H))\le \varepsilon.
\]

If $h(H)/k<\pi$, the protected group is underrepresented, and the algorithm
tries to promote protected nodes into the top-$k$. If $h(H)/k>\pi$, the
roles are reversed, and the algorithm promotes non-protected nodes. Let $P$
be the currently promoted group and $O=V\setminus P$ the opposing group.

\subsection{Top-\texorpdfstring{$k$}{k} Frontier Principle}\label{sec:frontier}

The active frontier is
\[
\mathcal F(H):=(P\setminus T_k(H))\times (O\cap T_k(H)).
\]
A pair $(a,b)\in\mathcal F(H)$ consists of a promoted node $a$ outside the
top-$k$ and an opposing node $b$ inside the top-$k$. Its current
nonnegative score gap is $\Gamma_H(a,b)$.
Closing this gap makes $a$ competitive with an opposing node currently in
the top-$k$, and is therefore the basic local operation used by our
algorithms.

For an optional frontier-window parameter $L\in\mathbb N\cup\{\infty\}$, we
define $\mathcal F_L(H)\subseteq \mathcal F(H)$ by ordering
$P\setminus T_k(H)$ by decreasing Katz score, ordering $O\cap T_k(H)$ by
increasing Katz score, breaking ties by a fixed node order, and taking the
first $L$ pairs in the resulting lexicographic product. If $L=\infty$,
then the full frontier is searched. 

For an admissible edge $e$, define the differential gain of $e$ for a
frontier pair $(a,b)$ as
\[
\delta_H(a,b\mid e)
=
\bigl(s_{H\oplus e}(a)-s_H(a)\bigr)
-
\bigl(s_{H\oplus e}(b)-s_H(b)\bigr).
\]
By \Cref{prop:factorization}, this quantity admits a closed-form Katz
factorization. Positive differential gain means that the edge helps the
promoted node $a$ more than the opposing node $b$.

Let
$
g_1^H(a,b)\ge g_2^H(a,b)\ge \cdots>0
$
be the positive differential gains over all currently admissible edges,
sorted in nonincreasing order. We define the cost of pair
$(a,b)$ as
\[
C_H(a,b)
:=
\min\left\{
r:
\sum_{i=1}^r g_i^H(a,b)\ge \Gamma_H(a,b)
\right\},
\]
with $C_H(a,b)=0$ if $\Gamma_H(a,b)=0$, and $C_H(a,b)=\infty$ if no such
$r$ exists. Intuitively, $C_H(a,b)$ is the number of currently strongest
helpful edits needed to close the present pair gap.

\subsection{Dense Frontier Search}\label{sec:alg}

As a reference method, \alg searches the frontier
$\mathcal F_L(G)$ using the exact differential gains from
\Cref{lem:katz-sensitivity}. For each frontier pair $(a,b)$, it greedily
examines positive-gain admissible edges and evaluates tentative prefixes
using the rank-one Katz update. A prefix is committed only if it strictly
decreases $\Phi(h(G))$, or leaves $\Phi(h(G))$ unchanged while strictly
reducing $\Gamma_G(a,b)$; otherwise, the tentative edits are rolled back.
Hence, under Katz validity, every committed round weakly decreases the
fairness objective, while plateau commits make progress on the targeted
boundary gap.

This procedure requires dense Katz-kernel maintenance, pair-specific gain
evaluations, and repeated tentative updates and rollbacks, making it
suitable only as a small-graph reference. \algfast retains the
boundary-exchange principle while replacing exact kernel-based gains with
a scalable score proxy.

\subsection{\algfast: Scalable Boundary-Link Search}
\label{sec:fast}

We now present \algfast{} (\textbf{B}oundary-\textbf{L}ink \textbf{A}ugmentation by
\textbf{D}irect \textbf{E}dges), a scalable boundary-link algorithm (\Cref{alg:fast}). The method keeps
the central principle of the boundary: fairness changes only when nodes cross
the top-$k$ boundary, so edits should be spent on promotable nodes that are
already close to entering the top-$k$. In contrast to the dense algorithm, however, \algfast
does not maintain the dense Katz kernel and does not evaluate all pair-specific
differential gains. It uses only Katz scores, admissible direct-target edges,
and Jacobi score updates after each committed batch.

\textbf{Setup.}
At a current graph $G$, let $P$ be the group that should be promoted and
let $O=V\setminus P$ be the opposing group. Define the vulnerable set
$
D:=O\cap T_k(G).
$
\algfast anchors the gap estimate on a single \emph{boundary node}
\[
b^\star := \argmin_{b \in D}\ s_G(b),
\]
i.e., the lowest-scoring opposing node currently in the top-$k$. 
Focusing on the weakest opposing node is consistent with the boundary-exchange
principle: if a promotable node is to improve the top-$k$ group composition,
it must eventually displace an opposing node, and the easiest such displacement
is against the lowest-scoring opposing node $b^\star$.

\textbf{Batch estimation.}
For a parameter $q\in\mathbb N$, \algfast considers the boundary candidate
set $\mathcal C_q(G)$ consisting of the $q$ highest-scoring nodes in
$P\setminus T_k(G)$ that have at least one remaining admissible incoming edge. If fewer than $q$ such nodes exist, all
of them are considered.
For each candidate $a\in\mathcal C_q(G)$, \algfast constructs a \emph{source
	batch} by greedily accumulating admissible tails in decreasing order of their
current Katz score. Let $\sigma_1,\sigma_2,\ldots$ be the nodes of $V$
sorted so that $s_G(\sigma_1)\ge s_G(\sigma_2)\ge \cdots$, with ties broken
by fixed node order. The batch $\mathcal B(a)$ is built by scanning
$\sigma_1,\sigma_2,\ldots$ and appending $\sigma_i\to a$ whenever
$\sigma_i\neq a$ and $\sigma_i\to a \in \mathcal{A}$, until the cumulative estimated gain closes $\Gamma_G(a,b^\star)$, adding each source in turn and subtracting the
proxy $\alpha\cdot s_G(\sigma_i)$ from the remaining gap. The batch is
accepted as soon as the remaining gap reaches zero. 

The gain proxy uses $s_G(u)$ as a surrogate for the direct-target
differential gain. By \Cref{eq:score-change}, adding edge $u\to a$ boosts
$s_G(a)$ by
$
\tfrac{\alpha\, s_G(u)\, K_G(a,a)}{1-\alpha K_G(a,u)},
$
so high-score tails produce large score increases at $a$. Neglecting the
kernel factors $K_G(a,a)$ and $(1-\alpha K_G(a,u))^{-1}$, which are not
maintained by \algfast, the dominant factor is $\alpha\,s_G(u)$, which is
the proxy used in the estimation.

Among all candidates $a\in\mathcal C_q(G)$ for which a valid batch
$\mathcal B(a)$ exists, \algfast selects the best candidate by preferring
the smallest batch size (fewest edits), breaking ties by the highest current
score of $a$, and finally by the fixed node order. The edges of the winning
batch are then committed one by one to $G$, in the same greedy source order.
If the scan ends before the estimated gap closes, the candidate $a$ is discarded.

\textbf{Score update.}
After committing the batch, \algfast updates the Katz scores using warm-started Jacobi iterations,
$
s \leftarrow \mathbf{1} + \alpha A_G^\top s,
$
warm-started from the current score vector, until convergence.
It then updates the top-$k$ set and
group counts accordingly. The loop repeats until the fairness target is
reached, or no valid candidate exists.

\begin{figure}
	\centering
	\includegraphics[width=0.7\columnwidth]{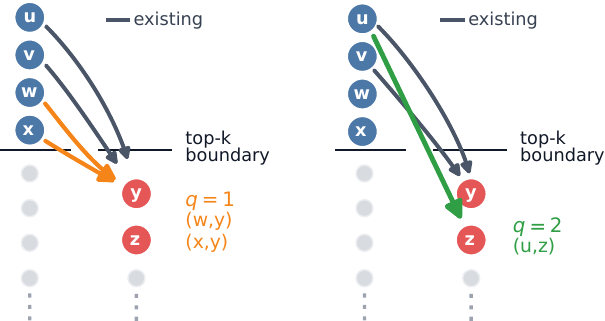}
	\caption{A case with $k=4$ where $q=1$ is suboptimal. Initially,
		$T_4=\{u,v,w,x\}$; $y$ and $z$ are the two highest-scoring actionable
		red nodes outside $T_4$. Promotion means entering $T_4$ and displacing
		a blue node. With $q=1$, only $y$ is evaluated and needs two edges;
		with $q=2$, $z$ is also evaluated and $(u,z)$ suffices.}
	\Description{Two graph panels compare boundary widths one and two. The first requires two added edges to promote node y, while the second identifies node z, which needs one edge.}
	\label{fig:blade-q-counterexample}
\end{figure}

\begin{algorithm}[t]
	\caption{\algfast}
	\label{alg:fast}
	\small
	\KwIn{Directed graph $G=(V,E)$, admissible edges $\mathcal A$,
		top-$k$ size $k$, target proportion $\pi$, tolerance $\varepsilon$,
		Katz parameter $\alpha$, boundary width $q$}
	\KwOut{Edit set $\Delta$}
	
	Compute Katz scores $s_G$ and top-$k$ set $T_k(G)$\;
	$\Delta\gets\emptyset$\;
	
	\While{$\Unfair(T_k(G))>\varepsilon$}{
		Determine promoted group $P$ and opposing group $O$\;
		$D\gets O\cap T_k(G)$\;
		\lIf{$(P\setminus T_k(G))=\emptyset$ \textbf{or} $D=\emptyset$}{
			\Break
		}
		
		$b^\star\gets \argmin_{b\in D} s_G(b)$\;
		Let $\mathcal C_q(G)$ be the $q$ highest-scoring actionable nodes in
		$P\setminus T_k(G)$\;
		\lIf{$\mathcal C_q(G)=\emptyset$}{
			\Break
		}
		
		Sort all nodes as $\sigma_1,\ldots,\sigma_n$ by decreasing $s_G$\;
		$\mathit{best}\gets\emptyset$\;
		
		\ForEach{$a\in\mathcal C_q(G)$}{
			$\Gamma\gets \Gamma_G(a,b^\star)$\;
			$\mathcal B(a)\gets\emptyset$\;
			
			\ForEach{$\sigma_i\in\sigma_1,\ldots,\sigma_n$}{
				\lIf{$\sigma_i=a$ \textbf{or} $(\sigma_i\to a)\notin\mathcal A$}{
					\Continue
				}
				
				$\mathcal B(a)\gets\mathcal B(a)\cup\{\sigma_i\to a\}$\;
				$\Gamma\gets\Gamma-\alpha\cdot s_G(\sigma_i)$\;
				
				\If{$\Gamma\le 0$}{
					\If{$\mathit{best}=\emptyset$
						\textbf{or} $|\mathcal B(a)|<|\mathit{best}|$
						\textbf{or}
						$\bigl(|\mathcal B(a)|=|\mathit{best}|
						\textbf{ and } s_G(a)>s_G(a_{\rm best})\bigr)$}{
						$\mathit{best}\gets\mathcal B(a)$\;
						$a_{\rm best}\gets a$\;
					}
					\Break
				}
			}
		}
		
		\lIf{$\mathit{best}=\emptyset$}{
			\Break
		}
		
		\tcp{Commit selected batch before updating scores.}
		\ForEach{edge $e\in\mathit{best}$ in greedy source order}{
			Add $e$ to $G$; $\Delta\gets\Delta\cup\{e\}$;
			$\mathcal A\gets\mathcal A\setminus\{e\}$\;
		}
		
		Update Katz scores $s_G$\; 
		Recompute $T_k(G)$ and group counts using $s_G$\;
	}
	
	\If{$\Unfair(T_k(G))\le\varepsilon$}{
		\Return $\Delta$, succeed\;
	}
	\Return $\Delta$, failed\;
\end{algorithm}

\textbf{Proxy interpretation.}
\algfast estimates the benefit of a direct-target edge \(u\to a\) by the proxy
$
\widehat g_G(u,a):=\alpha s_G(u).
$
This proxy captures the tail-node contribution to the Katz update, but ignores
kernel amplification and the possible spillover to the opposing boundary node
\(b^\star\).

\begin{proposition}\label{prop:maxincrease}
	Let \(u\to a\) be an admissible direct-target edge with Katz validity after
	adding this edge. Then
	\[
	\Delta s_G(a\mid u\to a)
	=
	\widehat g_G(u,a)\,
	\frac{K_G(a,a)}{1-\alpha K_G(a,u)}
	\]
	and, for the opposing boundary node \(b^\star\),
	\[
	\delta_G(a,b^\star\mid u\to a)
	=
	\widehat g_G(u,a)\,
	\frac{K_G(a,a)-K_G(a,b^\star)}
	{1-\alpha K_G(a,u)} .
	\]
	Consequently, \(\widehat g_G(u,a)\) never overestimates the direct score
	increase of \(a\). However, it may overestimate or underestimate the
	differential gain against \(b^\star\). In particular,
	\[
	\delta_G(a,b^\star\mid u\to a)>0
	\quad\Longleftrightarrow\quad
	K_G(a,a)>K_G(a,b^\star).
	\]
\end{proposition}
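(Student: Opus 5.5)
The plan is to specialize the rank-one formulas already established to the direct-target case $v=a$, and then to read off the sign and size of the two kernel factors from the Neumann-series representation.

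\textbf{Step 1 (the two identities).} Set $v=a$ in \eqref{eq:score-change} of \Cref{lem:katz-sensitivity}. This gives $\Delta s_G(a\mid u\to a)=\alpha s_G(u)K_G(a,a)/(1-\alpha K_G(a,u))$, which is the first identity once $\widehat g_G(u,a)=\alpha s_G(u)$ is substituted. For the second identity, apply \Cref{prop:factorization} with $v=a$ and $b=b^\star$: the factor $\lambda_G(u,a)$ equals $\widehat g_G(u,a)/(1-\alpha K_G(a,u))$, and the kernel gap is $\mu_G(a;a,b^\star)=K_G(a,a)-K_G(a,b^\star)$. Both identities are therefore direct substitutions.

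\textbf{Step 2 (the denominator is positive and at most $1$).} This is the step that needs care. The upper bound $1-\alpha K_G(a,u)\le 1$ follows because $K_G(a,u)\ge 0$, since every Neumann term $\alpha^kA^k$ is entrywise nonnegative. For strict positivity:
\begin{itemize}
\item Nonvanishing comes from the matrix determinant lemma: $\det(I-\alpha A_{G'})=\det(I-\alpha A_G)\,(1-\alpha K_G(a,u))$. Katz validity after the edit makes the left side nonzero.
\item For the sign, evaluate \eqref{eq:katz-update} at entry $(a,a)$. This yields $K_{G'}(a,a)=K_G(a,a)+\alpha K_G(a,u)K_G(a,a)/(1-\alpha K_G(a,u))=K_G(a,a)/(1-\alpha K_G(a,u))$.
\item By \Cref{prop:monotone} and the identity term of the Neumann series, both $K_{G'}(a,a)$ and $K_G(a,a)$ are finite and at least $1$.
\end{itemize}
Hence $1-\alpha K_G(a,u)=K_G(a,a)/K_G'(a,a)>0$. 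I expect this to be the main obstacle, because it is the only point where validity of the post-edit graph, rather than of $G$ alone, is genuinely used.

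\textbf{Step 3 (consequences).} From Step 2, $K_G(a,a)\ge 1$ and $0<1-\alpha K_G(a,u)\le 1$, so the multiplier $K_G(a,a)/(1-\alpha K_G(a,u))$ is at least $1$. Therefore $\widehat g_G(u,a)\le\Delta s_G(a\mid u\to a)$, so the proxy never overestimates the direct score increase.

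For the differential gain, the multiplier $(K_G(a,a)-K_G(a,b^\star))/(1-\alpha K_G(a,u))$ can fall on either side of $1$, and two instances show this:
\begin{itemize}
\item If there is no walk from $a$ to $b^\star$ and $a$ lies on a cycle, then $K_G(a,b^\star)=0$ and the ratio exceeds $1$, so the proxy underestimates.
\item If $a$ has an out-edge to $b^\star$ and a short return path to itself with suitable $\alpha$, then $K_G(a,b^\star)$ is close to $K_G(a,a)$ and the ratio is below $1$; it can even be negative. In that case the proxy overestimates.
\end{itemize}

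Finally, the sign equivalence holds because $\widehat g_G(u,a)=\alpha s_G(u)\ge\alpha K_G(u,u)\ge\alpha>0$ and the denominator is positive by Step 2. The sign of $\delta_G(a,b^\star\mid u\to a)$ is therefore exactly the sign of $K_G(a,a)-K_G(a,b^\star)$.
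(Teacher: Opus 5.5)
Your proposal is correct and follows essentially the paper's route: you specialize the rank-one update and the factorization to head node $a$, then use $K_G(a,a)\ge 1$, $K_G(a,u)\ge 0$ and $1-\alpha K_G(a,u)>0$ to obtain the multiplier bound and the sign equivalence. Your derivation of positivity from $K_{G'}(a,a)=K_G(a,a)/(1-\alpha K_G(a,u))$ is more self-contained than the paper's bare appeal to Katz validity, and your instances for the over/underestimation claim go beyond the paper, which only asserts it; your second instance is underspecified, so pin it down by taking $a\to b^\star$ with no cycle through $a$ and no walk from $a$ to $u$, which makes the ratio $1-K_G(a,b^\star)\le 1-\alpha$.
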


\textbf{Boundary search width.}
The parameter $q$ controls the width of the boundary search. For $q=1$, \algfast considers only the single highest-scoring actionable node in $P \setminus T_k(G)$ as the promotion candidate. This is the node closest to the top-$k$ boundary, but it need not be the most efficiently promotable one: the gap it must close may be large, or the available sources may be weak, requiring many edges.
For $q>1$, \algfast evaluates the $q$ highest-scoring candidates before committing, selecting the one whose batch is smallest. Figure~\ref{fig:blade-q-counterexample} illustrates why this matters. In the left panel ($q=1$), only boundary node $y$ is considered and since $w$ and $x$ are the best available sources into node $y$, \algfast selects edges $(w,y)$ and $(x,y)$, spending two edits to promote $y$'s score. In the right panel ($q=2$), node $z$ is also considered as a boundary node. Here, a single edge $(u,z)$ gives $z$ enough support to cross the top-$k$ boundary. \algfast commits this one-edit solution instead, reaching the target at half the cost. The gain from larger $q$ is thus bounded by the cost difference between the best and worst near-boundary candidates; in practice, small values $q \in \{2,3\}$ already recover most of this benefit.

\subsection{Complexity}
Let $n=|V|$, $m=|E|$, and let $M=|\mathcal A|$ denote the
number of admissible edge additions. 
Let $\Lambda:=\max_t |\mathcal F_L(G_t)|$ be the maximum number of
frontier pairs searched in one round; $\Lambda\le L$ for finite $L$,
and $\Lambda\le k(n-k)$ for the full frontier.

The dense frontier algorithm (\Cref{sec:alg}) requires dense Katz-kernel maintenance. The initial kernel
computation costs $O(n^3)$ time and $O(n^2)$ space. For each searched
frontier pair, evaluating all admissible differential gains costs
$O(M)$, or $O(M\log M)$ if the gains are sorted, and each tentative
rank-one update costs $O(n^2)$. Space is dominated by
$O(n^2+m)$ for kernel maintenance. Hence the dense approach is mainly a reference
method for smaller graphs.

For \algfast, sorting nodes by score costs $O(n\log n)$ per round. Selecting the top-$q$ actionable boundary candidates costs $O(n\log q)$, and scanning the sorted source list for these candidates costs $O(qn)$, assuming constant-time admissibility checks. After committing the chosen batch, \algfast recomputes Katz scores by warm-started Jacobi iterations. If $I_t$ denotes the number of iterations in round $t$, this update costs $O(I_t(m+n))$. Thus, one outer round costs
$O(n\log n+n\log q+qn+I_t(m+n)).$
Overall, \algfast avoids dense Katz-kernel maintenance, tentative rollbacks, and scans over all admissible edges. Its space complexity is $O(n+m)$.

Finally, when $\mathcal A$ is the unconstrained set of all missing directed non-self-loop edges, it does not need to be materialized. In this case, admissibility of a candidate edge can be tested by checking whether the edge is currently absent from the graph.

\begin{table}[t]
	\centering
	\caption{Summary of datasets.} 
	\label{tab:datasets}
	\resizebox{1\linewidth}{!}{		\begin{tabular}{lrrrcccl}
			\toprule
			\textbf{Dataset} &
			$\boldsymbol{|V|}$ &
			$\boldsymbol{|E|}$ &
			\textbf{Avg. deg.} &
			\textbf{Classes} &
			\textbf{Minor.} &
			\textbf{Domain} &
			\textbf{Ref.} \\
			\midrule
			\emph{Blogs} & $1\,224$ & $19\,022$ & $31.1$ & Left/Right  & $48\%$ & Political & \cite{adamic2005political} \\
			\emph{Hopkins} & $5\,180$ & $186\,586$ & $72.0$ & Fem./Male & $45\%$ & Social & \cite{traud2012social} \\
			\emph{Retweet} & $18\,470$ & $48\,365$ & $5.2$ & Left/Right  & $38\%$ & Political & \cite{conover2012partisan} \\
			\emph{Deezer} & $28\,281$ & $92\,752$ & $6.6$ & Fem./Male & $44\%$ & Social & \cite{rozemberczki2020characteristic} \\
			\emph{Penn} & $41\,554$ & $1\,362\,229$ & $65.6$ & Fem./Male & $48\%$ & Social & \cite{traud2012social} \\
			\emph{Pokec} & $1\,632\,803$ & $30\,622\,564$ & $37.5$ & Fem./Male & $49\%$ & Social & \cite{takac2012data} \\
			\bottomrule
	\end{tabular}}
\end{table}

\section{Experiments}
We discuss the following research questions:

\begin{itemize}
	\item \textbf{RQ1 Effectiveness and edit efficiency:} Can the proposed edge-addition methods achieve the target Katz top-$k$ group representation with fewer edits than baselines?
	\item \textbf{RQ2 Scalability:} How efficient are our algorithms?
	\item \textbf{RQ3 Sensitivity to graph structure and parameters:} How do the Katz parameter, top-$k$ cutoff, target proportion, and boundary width affect edit cost, and how much does \algfast perturb the original Katz ranking?
\end{itemize}

\subsection{Experimental Setup}
\textbf{Algorithms.}
Since no prior method directly targets fair top-$k$ composition under Katz centrality, 
we introduce natural baselines spanning three strategies: exact and greedy optimization, 
global mass balancing, and group-based support. 
To further contextualize the problem relative to existing link-analysis fairness methods, 
we report PageRank-based fairness methods under their native PageRank rankings. 
Specifically, we use:\looseness=-1

\begin{itemize}
	\item \blopt: An exact solver that performs depth-wise enumeration over
	admissible edge subsets in nondecreasing cardinality. It is used only on
	the small synthetic instances.
	
	\item \blgapgreedy: At each iteration, it determines the promoted group
	$P$ and opposing group $O$ and let $b^\star$ be the lowest-scoring node in
	$O\cap T_k(G)$, and considers admissible edges into boundary candidates
	$a \in P\setminus T_k(G)$. For each candidate edge $e$,
	it computes its exact one-edge reduction of the boundary gap
	$\Gamma_G(a,b^\star)$. It then adds the edge with largest positive gap reduction,
	recomputes Katz scores, and repeats until the target is reached.
	
 \item \blsamegroup:
Boosts underrepresented nodes by adding support edges among nodes of the same group. At each step it considers a window of the top-ranked underrepresented nodes outside the top-$k$
(window size $W=100$), forms batches of within-group support edges around them, commits the batch that most improves the top-$k$ composition without worsening
it, recomputes Katz scores, and stops once the target is reached.

\item \blkatzmass: For a graph $G$, let
$
M_r(G):=
\frac{\sum_{v\in V_r} s_G(v)}
{\sum_{v\in V} s_G(v)}
$
be the protected group's share of total Katz centrality mass, and define
$\Psi_{\mathrm{mass}}(G):=(M_r(G)-\pi)^2$. At each step, \blkatzmass samples admissible edges whose endpoints move Katz mass in the direction that reduces the protected-group mass
deviation. It uses $\lceil 0.001|E|\rceil$ random edge attempts per score-update commit,
recomputes the top-$k$ ranking after each committed batch, and stops when either the desired top-$k$ composition or the $\Psi_{\mathrm{mass}}$ target is reached.

	\item \alg: Our dense algorithm described in
	\Cref{sec:alg}. Given the $O(n^3)$ runtime it is unusable in practice; it functions purely as a small-scale sanity reference.

	\item \algfast: Our scalable algorithm from
	\Cref{sec:fast}. We set the boundary width $q=2$ unless stated otherwise and report exact converged Katz results after \algfast edits. For the Katz convergence, we use a tolerance of $1e-10$ and at most 200 iterations.

\end{itemize}
We additionally use fair-PageRank (PR) algorithms because no established fairness baselines exist for Katz-centrality top-$k$ design.
\textsc{FairWalk}~\cite{rahman2019fairwalk} balances random-walk transition probabilities across groups,
\textsc{CrossWalk}~\cite{khajehnejad2022crosswalk} reweights walks to improve fairness in graph representation learning, and \textsc{FairGD}~\cite{wang2026fairness} reweights PageRank transitions.
\bllfprn and \bllfpru~\cite{tsioutsiouliklis2021} impose locally fair
PageRank transitions using neighborhood-based and uniform residual
allocation, respectively.

\textbf{Experimental settings.}
Unless otherwise stated, all real-network experiments use $k=100$, target proportion $\pi=1/2$, and tolerance $\varepsilon=0$, i.e., the target is exact parity in the top-$100$ set with 50 protected and 50 non-protected nodes.
We use $\pi=1/2$ only as a controlled benchmark. The framework supports any
stakeholder-specified target $\pi$, whose application-dependent choice is
outside our scope; RQ3 evaluates sensitivity to $\pi$.

We set $\alpha=1/d_{\max}$, where $d_{\max}$ is the maximum out-degree of the original directed graph. This is conservative: across all datasets, $\alpha$ is below both $1/\rho(A)$ and $1/\rho(A+\Delta)$ after intervention; in our runs, the largest observed ratio $\alpha/(1/\rho(A+\Delta))=\alpha\rho(A+\Delta)$ was $0.17$ on Hopkins, and all other datasets had larger margins. 
The admissible edge set $\mathcal{A}$ contains all missing directed
non-self-loop edges. We use this unconstrained set as an algorithmic benchmark:
it gives all edge-addition methods the same design space and measures what they can achieve
without conflating optimization quality with a particular application's
eligibility rules. It is not intended to imply that every missing edge is a
valid intervention in deployment. We therefore also evaluate a local two-hop
admissible set, where an edge $u \to v$ is eligible only if $u$ can reach $v$
by a directed path of length two in the original graph. This restriction
serves as a structural proxy for locally plausible introductions or
friend-of-friend recommendations;
a deployed system should further filter $\mathcal A$ using consent, relevance,
safety, and other domain requirements.

All methods are run with a one-hour time limit.
All experiments were conducted on a CPU-only cluster, with each run allocated a dedicated compute node equipped with an AMD EPYC 9634 processor and 1.5 TiB of RAM.

\textbf{Datasets.}
We evaluate both synthetic and real-world networks. 
The synthetic benchmarks are biased preferential attachment (BPA) graphs with controlled group imbalance and homophily, using the BPA model introduced by Avin et al.~\cite{avin2015homophily}. 
\Cref{tab:datasets} gives an overview of the real-world datasets. We use the node classes released with the original datasets.

\subsection{Results}

\begin{figure}[t]
	\centering
	\includegraphics[width=0.9\linewidth]{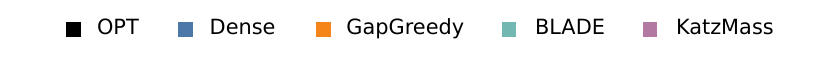}
	\begin{subfigure}{0.5\columnwidth}
		\centering
		\includegraphics[width=.9\linewidth]{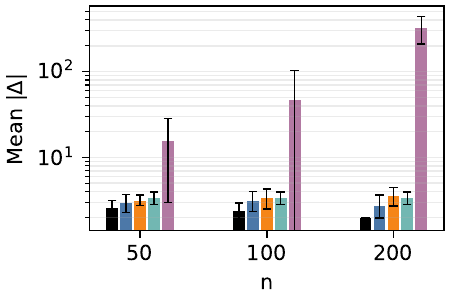}
		\caption{Mean added edges.}
		\label{fig:syn1}
	\end{subfigure}\hfill	\begin{subfigure}{0.5\columnwidth}
		\centering
		\includegraphics[width=.9\linewidth]{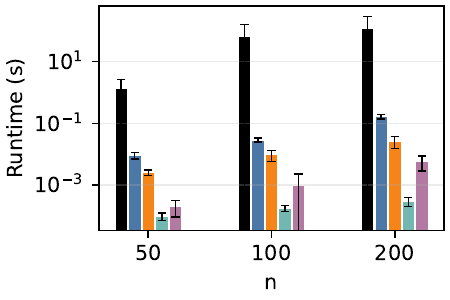}
		\caption{Mean running time.}
		\label{fig:syn2}
	\end{subfigure}
	\caption{Experiments on small synthetic graphs. \blsamegroup not shown as it did not find any feasible solutions.}
	\Description{Bar charts compare the mean number of added edges and mean running time of six methods on synthetic graphs with 50, 100, and 200 nodes.}
	\label{fig:syn}
\end{figure}

\noindent
\underline{\textbf{RQ1 Effectiveness and edit efficiency:}}
\blopt is computationally feasible only on small instances, so we use it as an exact reference in a synthetic scaling study. We generate directed BPA graphs with homophily $\rho=0.5$, attachment parameter $m=2$, minority fraction $0.35$, target $\pi=0.5$, $k=6$, and $\alpha=1/d_{\max}$. For $n \in \{50,100,200\}$, we average over 10 independent runs. All methods use the same two-hop admissible edge set around promotable nodes. 

\Cref{fig:syn} shows that \blopt gives the fewest edits when it finishes, but its runtime grows quickly and it begins to time out as $n$ increases. The dense boundary method stays closest to \blopt in edit count, but also becomes costly. In contrast, \algfast reaches the target on all tested instances, is orders of magnitude faster, and uses only a few additional edits. \blgapgreedy succeeds reliably but is slower than \algfast and does not improve the edit-quality/runtime tradeoff. Thus, boundary-based methods remain close to the exact reference where it is feasible, while \algfast provides the scalable alternative.

\textbf{Real-world graph results.}
\Cref{tab:katz_results,tab:Katz-edits} evaluate our algorithms on real-world networks.
The results show that \algfast is the only method that consistently solves the fair top-$k$ Katz design task on all six real-world networks. For $k=100$, $\pi=1/2$, and $\varepsilon=0$, \algfast reaches zero final unfairness on every dataset, whereas the baselines are substantially less reliable.
\Cref{tab:katz_results} reports the resulting top-$k$ group composition.
Since $k=100$ and the target is 50 protected nodes, \algfast must increase
the protected count by $50$ minus the initial protected count. 
\Cref{tab:Katz-edits} reports the corresponding intervention cost in added edges.

\blgapgreedy is a strong boundary-based baseline and is competitive when it finishes, but unlike \algfast it does not batch edge additions and recomputes Katz scores after each single edit; as a result, it times out on four of the six real-world datasets. 
\blkatzmass often fails to improve and in several cases actively worsens the top-$k$ objective, because increasing aggregate group mass does not constrain which nodes receive the boost: edges that raise the protected group's total score may simultaneously elevate opposing nodes already inside the top-$k$ boundary, confirming that balancing aggregate Katz mass is not sufficient for controlling top-$k$ composition. \blsamegroup is the strongest non-boundary baseline, but it typically requires many more edits and misses exact parity on Penn.

The edit counts further support the boundary-based design principle. Across the datasets where both methods reach or nearly reach the target, \algfast uses substantially fewer edits than \blsamegroup. \blgapgreedy also demonstrates the value of boundary-aware gap closing, but its lack of batching prevents it from scaling. Overall, these results indicate that fair top-$k$ Katz design is best addressed by targeted and batched boundary interventions rather than by global Katz-mass optimization, untargeted group-based edge additions, or unbatched greedy gap closing.

\noindent
\textbf{Local two-hop interventions.}
\Cref{tab:twohop-blade} compares \algfast under the
unconstrained admissible set used in the main experiments and under the
two-hop admissible set. \algfast reaches the target on all six real-world datasets in both
settings. The two-hop constraint changes the edit count only marginally: the
total number of edits increases from $5{,}546$ to $5{,}601$, an increase of
only $55$ edits overall. Three of the six datasets require exactly the same
number of edits.

\noindent
\textbf{PageRank comparison.}
\Cref{tab:k100-table-1-PR-reduction} gives contextual comparisons to fairness-aware PageRank and random-walk methods under their native PageRank rankings. These methods are not direct competitors, since they modify transitions, walks, or representations rather than adding edges to optimize Katz top-$k$ representation. Their effect is highly dataset-dependent: while they reduce top-$k$ PageRank unfairness on some datasets, they can also leave the objective unchanged or substantially worsen unfairness on others. For example, \bllfprn eliminates the PageRank top-$k$ deviation on Blogs, Hopkins, and Penn, but substantially increases it on Retweet. Thus, fairness of global PageRank mass does not ensure fair top-$k$ representation, and PageRank-oriented interventions do not directly solve the Katz top-$k$ graph-design objective studied here.

\begin{table}[t]
	\centering
	
	\caption{Katz-based methods evaluation for $k=100$. OOT denotes timeout after one hour.}
	\label{tab:k100-evaluation-main}
	
	\begin{subtable}[t]{\linewidth}
		\centering\resizebox{\linewidth}{!}{			\begin{tabular}{lrrrrrr}
				\toprule
				\textbf{Algorithm} & Blogs & Hopkins & Retweet & Deezer & Penn  & Pokec \\
				\midrule
				\blgapgreedy & 100.0 & 100.0 & OOT & OOT & OOT  & OOT \\
				\blsamegroup & 100.0 & 100.0 & 100.0 & 100.0 & 97.2 & 100.0 \\
				\blkatzmass & -300.0 & -77.8 & -12.9 & 0.0 & 12.9 &  0.0 \\
				\algfast & 100.0 & 100.0 & 100.0 & 100.0 & 100.0 & 100.0 \\
				\bottomrule
		\end{tabular}}
		\caption{Reduction of top-$k$ unfairness (\%).}
		\label{tab:Katz-reduction}
	\end{subtable}

	\begin{subtable}[t]{\linewidth}
		\centering
		
		\resizebox{\linewidth}{!}{
			\begin{tabular}{lcccccc}
				\toprule
				\textbf{Algorithm} & Blogs & Hopkins & Retweet & Deezer & Penn & Pokec \\
				\midrule
				Initial \# protected  & 48 & 44 & 34 & 26 & 20 & 38 \\
				\midrule
				\blgapgreedy
				& 50 & 50 & OOT & OOT & OOT & OOT \\
				\blsamegroup
				& 50 & 50 & 50 & 50 & 45 & 50 \\
				\blkatzmass
				& 46 & 42 & 33 & 26 & 22 & 38 \\
				\algfast
				& 50 & 50 & 50 & 50 & 50 & 50 \\
				\bottomrule
			\end{tabular}
			
		}
		\subcaption{Numb.~of protected nodes in top-$k$ before and after intervention.}
		\label{tab:katz_results}
	\end{subtable}
	
	\begin{subtable}[t]{\linewidth}
		\centering\resizebox{\linewidth}{!}{			\begin{tabular}{lrrrrrr}
				\toprule
				\textbf{Algorithm} & Blogs & Hopkins & Retweet & Deezer & Penn & Pokec \\
				\midrule
				\blgapgreedy & 5 & 100 & OOT & OOT & OOT & OOT \\
				\blsamegroup & 176 & 2\,006 & 2\,626 & 1\,752 & 16\,388 & 3\,145 \\
				\blkatzmass & 934 & 16\,589 & 1\,873 & 8\,218 & 129\,337 & 1\,508\,358 \\
				\algfast & 4 & 87 & 190 & 191 & 4\,337 & 737 \\
				\bottomrule
		\end{tabular}}
		\caption{Number of added edges ($|\Delta|$).}
		\label{tab:Katz-edits}
	\end{subtable}

	\begin{subtable}[t]{\linewidth}
		\centering\resizebox{\linewidth}{!}{			\begin{tabular}{lrrrrrr}
				\toprule
				\textbf{Algorithm} & Blogs & Hopkins & Retweet & Deezer & Penn & Pokec \\
				\midrule
				\blgapgreedy & 1.23 & 2942.68 & OOT & OOT & OOT & OOT \\				
				\blsamegroup & 0.03 & 3.43 & 0.47 & 6.68 & 83.78 & 11.60 \\
				\blkatzmass & 0.03 & 0.43 & 0.23 & 0.67 & 3.94 & 95.51 \\
				\algfast & 0.00 & 0.13 & 0.46 & 0.91 & 25.77 & 101.09 \\
				\bottomrule
		\end{tabular}}
		\caption{Running time in seconds.}
		\label{tab:runtime}
	\end{subtable}

	\medskip
	\caption{\algfast edit counts under unconstrained and two-hop admissible
		edge sets. Both reach the target on all datasets.}
	\label{tab:twohop-blade}
	\small
	\resizebox{\columnwidth}{!}{		\begin{tabular}{lrrrrrrr}
			\toprule
			\textbf{Admissible set}
			& Blogs & Hopkins & Retweet & Deezer & Penn   & Pokec & Total \\
			\midrule
			All missing
			& 4 & 87 & 190 & 191 & 4\,337 & 737 & 5\,546 \\
			2-hop
			& 4 & 87 & 198 & 214 & 4\,337 & 761 & 5\,601 \\
			\midrule
			\textbf{Difference}
			& 0 & 0 & 8 & 23 & 0 & 24 & 55 \\
			\bottomrule
		\end{tabular}	}
\end{table}

\noindent
\underline{\textbf{RQ2 Efficiency:}}
\Cref{tab:runtime,tab:pr_runtime} report the running times of the Katz edge-addition methods and the PageRank-based baselines. \algfast reaches exact parity with the fewest edits on all six datasets and is fastest on Blogs and Hopkins. Although some baselines are faster, \blsamegroup uses more edits and misses parity on Penn, while \blkatzmass often fails to improve the top-$k$ objective.
\algfast stays within the time limit on all six datasets, taking $25.77$ seconds on Penn and $101.09$ seconds on Pokec, while reaching the target everywhere. Pokec is the slowest case for \algfast because each score update is performed on by far the largest graph, with over $1.6$ million nodes and $30$ million edges.
Among the Katz edge-addition methods, \algfast provides the best trade-off
between runtime, success, and edit cost. The PageRank baselines are fast on
the smaller graphs but do not scale uniformly in our implementations:
\blfairgd takes $1{,}893.20$ seconds on Pokec, while both LFPR variants
terminate due to insufficient memory. Their runtimes are not directly
comparable to \algfast because they do not solve the Katz edge-addition
problem.

\textbf{Effect of batching.} To measure the runtime benefit of batching, we
evaluate \algslow, an unbatched variant, in the scalability study. \algslow
follows the same boundary-link rule as \algfast but adds a single edge per
iteration and updates Katz scores after every edit. It produces the same edit counts as \algfast on the evaluated datasets, but removes the batching mechanism.
Compared with \algfast, the difference is small on the smaller graphs, where update costs are negligible, but becomes substantial on the larger networks. On Penn and Pokec, \algslow is about 5.1$\times$ and 5.9$\times$ slower, taking 131.17 and 597.80 seconds, respectively. This confirms that \algfast’s batching is not needed for edit quality, but is important for scalability.

\begin{table}[t]
	\centering
\caption{Results for the PageRank baselines for $k=100$. OOM denotes an out-of-memory failure.}
	\label{tab:k100-table-1-PR-reduction}
	
	\begin{subtable}[t]{\linewidth}
		\centering\resizebox{\linewidth}{!}{			\begin{tabular}{lrrrrrr}
				\toprule
				\textbf{Algorithm} & Blogs & Hopkins & Retweet & Deezer & Penn & Pokec \\
				\midrule
				\blfairgd & -525.0 & 0.0 & -300.0 & 0.0 & 0.0 & 0.0 \\
				\blfairwalk & -300.0 & 0.0 & -300.0 & 23.4 & 85.9 & -1244.4 \\
				\blcrosswalk & -800.0 & 55.6 & -125.0 & 85.9 & 85.9 & -800.0 \\
				\bllfprn & 100.0 & 100.0 & -3500.0 & 90.2 & 100.0 & OOM \\
				\bllfpru & 75.0 & 88.9 & -5525.0 & 80.9 & 93.8 & OOM \\
				\bottomrule
			\end{tabular}		}
		\caption{Reduction of top-$k$ unfairness wrt.~the PR ranking (\%).}
		\label{tab:pr_unfairness}
	\end{subtable}
	
	\begin{subtable}[t]{\linewidth}
		\centering\resizebox{\linewidth}{!}{			\begin{tabular}{lrrrrrr@{}}
				\toprule
				\textbf{Algorithm} & Blogs & Hopkins & Retweet & Deezer & Penn & Pokec \\
				\midrule
				\blfairgd & 0.17 & 1.76 & 0.75 & 1.32 & 52.93 & 1893.20 \\
				\blfairwalk & 0.00 & 0.00 & 0.00 & 0.00 & 0.02 & 0.83 \\
				\blcrosswalk & 0.01 & 0.02 & 0.01 & 0.03 & 0.24 & 224.86 \\
				\bllfprn & 0.02 & 0.11 & 9.29 & 10.04 & 6.00 & OOM \\
				\bllfpru & 0.04 & 0.87 & 9.67 & 21.25 & 63.67 & OOM \\
				\bottomrule
			\end{tabular}		}
		\caption{Running time in seconds.}
		\label{tab:pr_runtime}
	\end{subtable}
\end{table}

\noindent
\underline{\textbf{RQ3 Sensitivity to graph structure and parameters:}}
We evaluate robustness, proxy accuracy, Katz parameter $\alpha$, frontier width $q$, top-$k$ cutoff, and target proportion $\pi$.

\textbf{Ranking robustness.}
For $k=100$, \Cref{tab:ranking_preservation} shows the top-100 structure is stable: although membership changes are expected because \algfast promotes nodes across the top-$k$ boundary, the mean Overlap@100 is $0.85$. Among nodes shared by the original and post-intervention top-100 sets, Kendall $\tau$ is at least $0.943$ and averages $0.975$. \algfast also preserves the full converged Katz ranking, with Spearman correlation above $0.9999$ on every dataset.

\begin{figure*}[!t]
\centering
\begin{minipage}[t]{0.48\textwidth}
\vspace{0pt}
\centering
\begin{minipage}[t]{0.48\linewidth}
	\centering
	\hrule height 0pt
	\captionsetup{hypcap=false}
	\captionof{table}{Ranking robustness. Overlap@100 is the fraction of nodes
		shared by original and post-intervention top-100. Kendall $\tau$ is
		computed on their common nodes.}
	\label{tab:ranking_preservation}
	\small
	\setlength{\tabcolsep}{4pt}
	\resizebox{\linewidth}{!}{		\begin{tabular}{lcc}
			\toprule
			\textbf{Dataset} & Overlap@100 & Kendall $\tau$ \\
			\midrule
			Blogs   & 0.98 & 0.997 \\
			Retweet & 0.84 & 0.962 \\
			Deezer  & 0.76 & 0.961 \\
			Hopkins & 0.94 & 0.992 \\
			Penn    & 0.70 & 0.943 \\
			Pokec   & 0.88 & 0.998 \\
			\midrule
			\textbf{Mean} & 0.85 & 0.975 \\
			\bottomrule
		\end{tabular}}
\end{minipage}\hfill\begin{minipage}[t]{0.48\linewidth}
	\centering
	\hrule height 0pt
	\captionsetup{hypcap=false}
	\includegraphics[height=0.72\linewidth]{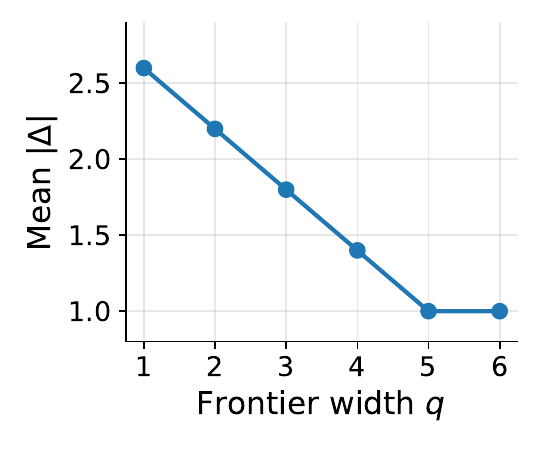}
	\captionof{figure}{Effect of \algfast's boundary width $q$ on the mean number of added edges.}
	\Description{A line chart shows mean edit count decreasing as boundary width increases from one to six.}
	\label{fig:frontierqtable}
\end{minipage}

\medskip
\captionsetup{hypcap=false}
\captionof{table}{Relative proxy error (\%) on Blogs as the Katz parameter $\alpha$ varies. We report the mean over all evaluated candidates and the error of the selected candidate.}
\label{tab:proxy-error}
\setlength{\tabcolsep}{4pt}
\resizebox{\linewidth}{!}{	\begin{tabular}{lrrrrrrrrr}
		\toprule
		& \multicolumn{9}{c}{$c$ for $\alpha=c/\rho(A)$}\\
		\cmidrule(lr){2-10}
		& $0.1$ & $0.2$ & $0.3$ & $0.4$ & $0.5$ & $0.6$ & $0.7$ & $0.8$ & $0.9$ \\
		\midrule
		\textbf{Mean}     & 0.004 & 0.043 & 0.201 & 0.217 & 0.322 & 0.559 & 0.722 & 1.207 & 2.628 \\
		\textbf{Selected} & 0.004 & 0.047 & 0.197 & 0.224 & 0.335 & 0.565 & 0.725 & 1.259 & 2.728 \\
		\bottomrule
	\end{tabular}}
\end{minipage}\hfill\begin{minipage}[t]{0.48\textwidth}
	\vspace{0pt}
	\centering
	\refstepcounter{figure}
	\begin{subfigure}{0.48\linewidth}
		\centering
		\includegraphics[width=\linewidth]{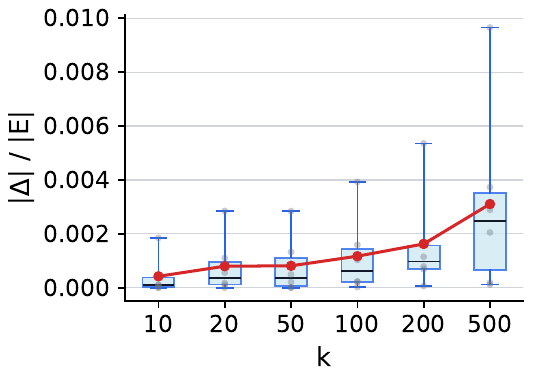}
		\caption{Top-$k$ cutoff.}
		\label{fig:blade-k-variant1}
	\end{subfigure}
	\hfill	\begin{subfigure}{0.48\linewidth}
		\centering
		\includegraphics[width=\linewidth]{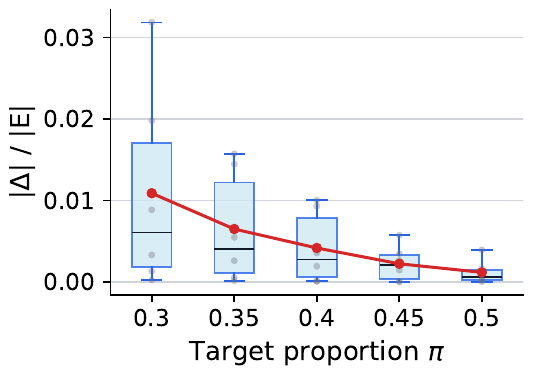}
		\caption{Target distribution $\pi$.}
		\label{fig:blade-target-variant1}
	\end{subfigure}
	\addtocounter{figure}{-1}
	\captionsetup{hypcap=false}
	\captionof{figure}{\algfast sensitivity: Mean ratio of added edges wrt.~size of $E$ over all real-world datasets.}
	\Description{Two line charts show the mean edit-to-edge ratio as the top-k cutoff and target group proportion vary.}
	\label{fig:blade-k-target-sensitivity}

	\medskip
	\captionsetup{hypcap=false}
	\captionof{table}{Effect of Katz parameter $\alpha$, measured by the number of edges added by
		\algfast to reach fairness.}
	\label{tab:alpha}
	\resizebox{\linewidth}{!}{		\begin{tabular}{lrrrrrrrrr}
			\toprule
			& \multicolumn{9}{c}{$c$ for $\alpha=c/\rho(A)$}\\
			\cmidrule(lr){2-10}
			\textbf{Dataset} & $0.1$ & $0.2$ & $0.3$ & $0.4$ & $0.5$ & $0.6$ & $0.7$ & $0.8$ & $0.9$ \\
			\midrule
			Blogs   & 4   & 6   & 7   & 16  & 25  & 41  & 58  & 86  & 128 \\
			Hopkins & 72  & 90  & 70  & 74  & 78  & 116 & 158 & 190 & 310 \\
			Retweet & 131 & 98  & 80  & 65  & 67  & 79  & 94  & 119 & 154 \\
			Deezer  & 237 & 191 & 159 & 136 & 109 & 97  & 92  & 81  & 100 \\
			Penn & 3\,734 & 2\,929 & 2\,381 & 1\,923 & 1\,627 & 1\,410 & 1\,350 & 1\,402 & 1\,273 \\
			Pokec   & 301 & 113 & 46  & 35  & 21  & 19  & 15  & 26  & 18 \\
			\bottomrule
	\end{tabular}}
\end{minipage}
\end{figure*}

\medskip
\textbf{Proxy accuracy.}
Table~\ref{tab:proxy-error} reports the relative proxy error
across the $\alpha$ range on Blogs.
For increasing $\alpha$,
it remains below $0.05\%$ through $0.2/\rho$.
Near the spectral boundary the error reaches $2.73\%$, but, importantly, the proxy
identifies the same top candidate as the full gain at every round,
so the approximation affects estimated magnitudes but not source selection.

\textbf{Effect of top-$k$ cutoff.}
\Cref{fig:blade-k-variant1} reports the $k$-sensitivity results on real-world
datasets for $\pi=0.5$, with
$k\in\{10,20,50,100,200,500\}$.
The required edit ratio remains low across all tested cutoffs, with even the largest setting staying around 1\%.
The mild increase at larger cutoffs is consistent with the fact that larger $k$ requires more underrepresented-group nodes to cross the top-$k$ boundary to satisfy the target distribution.

\textbf{Effect of target proportion $\pi$.}
To test sensitivity to the fairness target, we vary the protected-group target proportion $\pi$ over $\{0.30,0.35,\ldots,0.50\}$ while fixing the other parameters.
The edit cost is driven primarily by the distance between the requested target and the initial top-$k$ group composition, rather than by $\pi$ monotonically.
For example, Hopkins starts with 44 protected nodes in the top-100: reaching $\pi=0.5$ requires 87 edits, whereas reaching the farther target $\pi=0.3$ requires 483 edits.
Similarly, Blogs starts near parity, so $\pi=0.5$ requires only 4 edits, while $\pi=0.3$ requires 606 edits.
Thus, the larger costs at smaller $\pi$ in \Cref{fig:blade-k-target-sensitivity} reflect larger composition shifts, while the edit ratio remains modest overall.

\textbf{Effect of Katz parameter $\alpha$.}
We also test how the intervention cost changes as $\alpha$ approaches the
Katz stability boundary. For each dataset, we set
$\alpha = c/\rho(A)$ with $c \in \{0.1,\ldots,0.9\}$ and report the number
of added edges needed by \algfast. The results show that the effect of $\alpha$
is dataset-dependent. On Blogs and Hopkins, larger $\alpha$ generally makes
the task harder, requiring more edits as long walks receive more weight. In
contrast, Retweet, Deezer, and Pokec exhibit a U-shaped or decreasing trend:
moderate values of $\alpha$ make boundary nodes easier to promote, while very
small or near-critical values require more edits. Overall, \algfast remains
effective across the full valid range, showing that the method is not tied to
the conservative default choice $\alpha=1/d_{\max}$.

\textbf{Effect of frontier width.}
The width $q$ controls how many near-boundary candidates \algfast evaluates.
As \Cref{fig:blade-q-counterexample} illustrates, larger $q$ can avoid a
myopic choice when the highest-scoring candidate is not the cheapest to
promote. On the real-world benchmarks, edit counts did not change beyond
$q=2$. On $100$ planted random instances with $2000$--$3000$ nodes and
$5500$--$8500$ edges where the top candidate was deliberately not cheapest,
the mean edit count fell from $2.60$ at $q=1$ to $2.20$ at $q=2$ and $1.00$
for $q\ge5$ (\Cref{fig:frontierqtable}). Thus, wider search helps in
adversarial configurations but quickly saturates, motivating $q=2$ by default.

\section{Conclusion}

We introduced the Fair Top-$k$ Katz Centrality Design Problem, where the goal
is to achieve target group representation in the Katz top-$k$ set using as
few admissible edge additions as possible. We proved strong hardness results,
derived Katz sensitivity formulas, and developed \algfast for scalable targeted
graph interventions. Experiments show that \algfast improves top-$k$
representation with few edits (relative to graph size) while preserving ranking stability.
Future work includes extending the framework to other centrality measures, multi-group targets, and richer graph interventions.

\section*{Ethical Considerations}

Graph edits can alter exposure and opportunity for individuals even when they
improve aggregate group representation. The formulation also assumes that
group labels, target proportions, and admissible interventions are legitimately
available, which may raise privacy, consent, and governance concerns in a
deployment. The method should therefore not be the sole basis for high-stakes
decisions. Practitioners should define targets with affected stakeholders,
restrict edits to consented and application-valid actions, audit individual and
subgroup outcomes, and retain human oversight.

\appendix

\section{Omitted Proofs}
\label{app:proofs}
\begin{proof}[Proof of \Cref{thm:dual-inapprox-katz}]
	Fix a constant $\psi\ge 1$.
	We give a gap-preserving reduction from \textsc{Independent Set} on $3$-regular graphs.
	Let $(H=(U,F),\ell)$ be an instance, where the question is whether $H$ contains an independent set of size at least $\ell$.
	Write $n_H:=|U|$.
	
	\smallskip
	\noindent\textbf{Choice of the gap parameter.}
	Choose an even integer $Q$ such that
	\[
	Q > \psi \ell
	\qquad\text{and}\qquad
	Q\ge 8 .
	\]
	Since $\psi$ is a fixed constant, $Q$ is polynomial in the input size.
	
	\smallskip
	\noindent\textbf{Katz scores.}
	We use the Katz centrality vector
	\[
	s_G = (I-\alpha A(G)^\top)^{-1}\mathbf{1},
	\]
	with $\alpha=\tfrac12$.
	Since the graph constructed below is a DAG of depth at most $3$, all Katz scores can be computed by summing walks of length at most $3$.
	
	\smallskip
	\noindent\textbf{Construction.}
	We construct a directed, unweighted graph $G$ with two groups $\red$ and $\blue$.
	
	For each vertex $u_i\in U$, create:
	\begin{itemize}
		\item a selector tail $t_i$, colored $\blue$;
		\item a candidate node $r_i$, colored $\red$.
	\end{itemize}
	We make each selector tail have score exactly $Q+1$ by adding $2Q$ private leaf nodes
	\[
	y_{i,1},\dots,y_{i,2Q},
	\]
	colored $\blue$, with fixed edges
	\[
	y_{i,a}\to t_i
	\qquad
	\text{for all } a\in[2Q].
	\]
	Thus
	\[
	s(t_i)=1+\alpha(2Q)=Q+1.
	\]
	
	For each edge $e=\{u_i,u_j\}\in F$, create $\ell$ conflict clones
	\[
	q_{e,1},\dots,q_{e,\ell},
	\]
	colored $\blue$, and add fixed edges
	\[
	r_i\to q_{e,h},
	\qquad
	r_j\to q_{e,h},
	\qquad
	\text{for every } h\in[\ell].
	\]
	
	Next create $\ell$ blue buffer nodes
	\[
	w_1,\dots,w_\ell .
	\]
	Each buffer node $w_a$ is constructed to have Katz score
	\[
	s(w_a)=\frac Q2+\frac34 .
	\]
	To achieve this, add $Q-2$ private leaves
	\[
	d_{a,1},\dots,d_{a,Q-2}
	\]
	with fixed edges $d_{a,b}\to w_a$, and add one intermediate node $p_a$ with one private leaf $z_a\to p_a$ and one fixed edge $p_a\to w_a$.
	Then $s(p_a)=1+\alpha=\tfrac32$, and therefore
	\[
	s(w_a)
	=
	1+\alpha(Q-2)+\alpha\cdot \frac32
	=
	1+\frac{Q-2}{2}+\frac34
	=
	\frac Q2+\frac34 .
	\]
	
	Finally, create $\ell$ fallback red nodes
	\[
	f_1,\dots,f_\ell,
	\]
	colored $\red$.
	For each fallback node $f_a$, create $Q$ private blue leaves
	\[
	g_{a,1},\dots,g_{a,Q}.
	\]
	The edges $g_{a,b}\to f_a$ will be admissible edits rather than fixed edges.
	
	\smallskip
	\noindent\textbf{Admissible edits.}
	The admissible action set $\mathcal A$ consists of two types of edges:
	\begin{itemize}
		\item core selector edges
		\[
		t_i\to r_i
		\qquad
		\text{for every } u_i\in U;
		\]
		\item fallback edges
		\[
		g_{a,b}\to f_a
		\qquad
		\text{for every } a\in[\ell],\ b\in[Q].
		\]
	\end{itemize}
	There are no other admissible edits.
	
	\smallskip
	\noindent\textbf{Top-$k$ parameters.}
	Set
	\[
	k:=n_H+\ell,
	\qquad
	\pi:=\frac{\ell}{k},
	\qquad
	\varepsilon:=\frac{1}{4k^2}.
	\]
	Because the number of red nodes in the top-$k$ is an integer, any nonzero deviation from the target red count $\ell$ is at least $1/k$ in proportion.
	Hence
	\[
	\Unfair(T_k(G\oplus\Delta))<\varepsilon
	\]
	implies
	\[
	|T_k(G\oplus\Delta)\cap V_{\red}|=\ell .
	\]
	Thus feasibility for the dual instance is equivalent to achieving exactly $\ell$ red nodes in the top-$k$ set.
	
	\smallskip
	\noindent\textbf{Score levels.}
	Let $G'$ be any graph obtained by adding admissible edits.
	For a core candidate $r_i$, there is at most one admissible incoming selector edge $t_i\to r_i$.
	Thus:
	\[
	s(r_i)
	=
	\begin{cases}
		1, & \text{if } t_i\to r_i \text{ is not added},\\[2mm]
		1+\alpha s(t_i)
		=
		1+\frac{Q+1}{2}
		=
		\frac Q2+\frac32, & \text{if } t_i\to r_i \text{ is added}.
	\end{cases}
	\]
	We say that $u_i$ is selected if the edge $t_i\to r_i$ is added.
	
	Consider a conflict clone $q_{e,h}$ for $e=\{u_i,u_j\}$.
	If neither endpoint is selected, then
	\[
	s(q_{e,h})=1+\alpha(1+1)=2.
	\]
	If exactly one endpoint is selected, then
	\[
	s(q_{e,h})
	=
	1+\alpha\left(\frac Q2+\frac32+1\right)
	=
	\frac Q4+\frac94 .
	\]
	If both endpoints are selected, then
	\[
	s(q_{e,h})
	=
	1+\alpha\left(
	\frac Q2+\frac32
	+
	\frac Q2+\frac32
	\right)
	=
	\frac Q2+\frac52 .
	\]
	
	For a fallback node $f_a$, if exactly $r$ of its $Q$ private fallback edges have been added, then
	\[
	s(f_a)=1+\frac r2 .
	\]
	In particular, if $r<Q$, then
	\[
	s(f_a)\le 1+\frac{Q-1}{2}=\frac Q2+\frac12,
	\]
	whereas if all $Q$ private fallback edges are added, then
	\[
	s(f_a)=1+\frac Q2=\frac Q2+1.
	\]
	
	For $Q\ge 8$, the relevant score levels satisfy the strict ordering
	\begin{align}
		Q+1
		&>
		\frac Q2+\frac52
		>
		\frac Q2+\frac32
		>
		\frac Q2+1\notag\\
		&>
		\frac Q2+\frac34
		>
		\frac Q2+\frac12
		>
		\frac Q4+\frac94
		>
		2
		>
		1.\label{eq:gap-score-ordering}
	\end{align}
	Therefore the $n_H$ selector tails $t_i$ are always the highest-scoring nodes in the graph and hence always belong to the top-$k$ set.
	The remaining $\ell$ positions of the top-$k$ set are determined by the nodes below the selector tails.
	
	\smallskip
	\noindent\textbf{YES case.}
	Suppose $H$ contains an independent set $S\subseteq U$ with $|S|\ge \ell$.
	Choose any subset $S'\subseteq S$ of size exactly $\ell$.
	Add the $\ell$ selector edges
	\[
	\{\,t_i\to r_i : u_i\in S'\,\}.
	\]
	Every selected red candidate $r_i$ has score $\frac Q2+\frac32$.
	Since $S'$ is independent, every conflict clone has at most one selected endpoint and therefore has score at most
	\[
	\frac Q4+\frac94
	<
	\frac Q2+\frac34
	<
	\frac Q2+\frac32 .
	\]
	The buffer nodes have score $\frac Q2+\frac34$, and all fallback nodes are unactivated and have score $1$.
	Thus, after the $n_H$ selector tails, the next $\ell$ highest-scoring nodes are precisely the $\ell$ selected red candidates.
	Consequently,
	\[
	|T_k(G\oplus\Delta)\cap V_{\red}|=\ell,
	\]
	and hence
	\[
	\Unfair(T_k(G\oplus\Delta))=0<\varepsilon .
	\]
	Therefore
	\[
	\OPT_\varepsilon\le \ell .
	\]
	
	\smallskip
	\noindent\textbf{NO case.}
	Suppose $H$ contains no independent set of size $\ell$.
	We show that no feasible edit set of size smaller than $Q$ exists.
	
	Let $\Delta$ be any edit set with $|\Delta|<Q$.
	Since each fallback node requires all $Q$ of its private fallback edges to reach score $\frac Q2+1$, no fallback node is activated by $\Delta$.
	Indeed, every fallback node has score at most
	\[
	\frac Q2+\frac12
	<
	\frac Q2+\frac34,
	\]
	which is below the blue buffer score.
	
	Let
	\[
	S_\Delta
	:=
	\{\,u_i\in U : t_i\to r_i\in\Delta\,\}
	\]
	be the set of core vertices selected by $\Delta$.
	
	If $|S_\Delta|<\ell$, then fewer than $\ell$ red core candidates have score above the buffer nodes, and no fallback red node has score above the buffers.
	Thus, among the $\ell$ top-$k$ positions below the selector tails, fewer than $\ell$ are red.
	Hence the top-$k$ set does not contain exactly $\ell$ red nodes.
	
	Now suppose $|S_\Delta|\ge \ell$.
	Since $H$ has no independent set of size $\ell$, the set $S_\Delta$ cannot be independent.
	Therefore there exists an edge $e=\{u_i,u_j\}\in F$ with both endpoints in $S_\Delta$.
	For this edge $e$, all $\ell$ conflict clones
	\[
	q_{e,1},\dots,q_{e,\ell}
	\]
	have score
	\[
	\frac Q2+\frac52,
	\]
	which is strictly larger than the score $\frac Q2+\frac32$ of every selected red candidate.
	After the $n_H$ selector tails, the next $\ell$ highest-scoring nodes are these $\ell$ blue conflict clones.
	Thus the top-$k$ set again does not contain exactly $\ell$ red nodes.
	
	In both cases, $\Delta$ is infeasible.
	Since $\Delta$ was arbitrary with $|\Delta|<Q$, every feasible solution in the NO case has size at least $Q$:
	\[
	\OPT_\varepsilon\ge Q .
	\]
	
	\smallskip
	\noindent\textbf{Feasibility of the NO instances.}
	The constructed dual instance is always feasible.
	Indeed, by adding all $Q$ private fallback edges into each of the $\ell$ fallback nodes, each fallback node obtains score
	\[
	\frac Q2+1
	>
	\frac Q2+\frac34,
	\]
	which is above the buffer score.
	If no core selector edges are added, then all conflict clones have score $2$, and the $\ell$ activated fallback nodes occupy the $\ell$ positions below the $n_H$ selector tails.
	Therefore the resulting top-$k$ set contains exactly $\ell$ red nodes.
	Hence $\OPT_\varepsilon<\infty$ for every constructed instance.
	
	\smallskip
	\noindent\textbf{Gap and approximation contradiction.}
	We have shown:
	\[
	\text{YES instance:}\qquad
	\OPT_\varepsilon\le \ell,
	\]
	whereas
	\[
	\text{NO instance:}\qquad
	\OPT_\varepsilon\ge Q.
	\]
	By construction, $Q>\psi\ell$.
	
	Now suppose there were a polynomial-time $\psi$-approximation algorithm for the dual problem.
	On a YES instance, it would return a feasible edit set of size at most
	\[
	\psi\OPT_\varepsilon
	\le
	\psi\ell
	<
	Q.
	\]
	On a NO instance, every feasible edit set has size at least $Q$, so the algorithm must return a solution of size at least $Q$.
	Thus, by checking whether the returned solution has size smaller than $Q$, we could decide whether $H$ contains an independent set of size at least $\ell$.
	
	This would solve \textsc{Independent Set} on $3$-regular graphs in polynomial time, contradicting $\mathrm{P}\ne\mathrm{NP}$.
	Therefore no polynomial-time $\psi$-approximation exists unless $\mathrm{P}=\mathrm{NP}$.
\end{proof}

\begin{proof}[Proof of \Cref{cor:decision-nphard-katz}]
	We assume rational input parameters with polynomial bit complexity; Katz scores can then be computed exactly by solving a rational linear system, and score comparisons can be performed in polynomial time.
	The problem is in NP: given an edit set $\Delta$, we can check in
	polynomial time whether $|\Delta|\le C$ and whether the edited graph
	satisfies the target fairness constraint.
	NP-hardness follows from the gap construction in
	Theorem~3.2.
	In particular, distinguishing
	whether $\OPT_\varepsilon\le \ell$ or $\OPT_\varepsilon\ge Q$ is
	NP-hard, and therefore so is the threshold decision problem.
\end{proof}

\begin{proof}[Proof of \Cref{prop:monotone}]
	We have
	$K_G = \sum_{k=0}^{\infty} \alpha^k A_G^k$ and $K_{G'} = \sum_{k=0}^{\infty} \alpha^k A_{G'}^k$.
	Adding the edge $u \to v$ yields $A_{G'} = A_G + E$,
	where $E$ is the matrix with a single $1$ in position $(u,v)$ and zeros elsewhere. Hence $A_{G'} \ge A_G$ entrywise.
	For nonneg\-ative matrices, entrywise inequality implies $\rho(A_{G'}) \ge \rho(A_G)$,
	so $\alpha < 1/\rho(A_{G'}) \le 1/\rho(A_G)$ and both Neumann series converge.
	We prove $A_{G'}^k \ge A_G^k$ entrywise for all $k \ge 0$ by induction.
	For $k = 0$, both are the identity matrix. Assuming $A_{G'}^k \ge A_G^k$ entrywise,
	\[
	A_{G'}^{k+1} = A_{G'} \cdot A_{G'}^k
	\ge A_G \cdot A_{G'}^k
	\ge A_G \cdot A_G^k = A_G^{k+1},
	\]
	where the first inequality uses $A_{G'} \ge A_G$ and the second uses the induction hypothesis.
	Multiplying by $\alpha^k \ge 0$ and summing over $k$ yields $K_{G'} \ge K_G$ entrywise.
\end{proof}

\begin{proof}[Proof of \Cref{lem:katz-sensitivity}]
	Adding $u \to v$ gives $A_{G'} = A_G + \mathbf{e}_u \mathbf{e}_v^\top$.
	By Sherman--Morrison~\cite{sherman1950adjustment} with $M = I - \alpha A_G$,
	$\mathbf{x} = \alpha \mathbf{e}_u$, $\mathbf{y} = \mathbf{e}_v$:
	\begin{align*}
		K_{G'} = (M - \mathbf{x}\mathbf{y}^\top)^{-1}
		&= K_G + \frac{\alpha \, K_G \mathbf{e}_u \, \mathbf{e}_v^\top K_G}
		{1 - \alpha \, \mathbf{e}_v^\top K_G \mathbf{e}_u}
		\\&= K_G + \frac{\alpha}{1 - \alpha K_G(v,u)} \, \mathbf{k}_{\cdot u}\, \mathbf{k}_{v \cdot}.
	\end{align*}
	The denominator is positive: since $\alpha < 1/\rho(A_{G'})$, the rank-one matrix $\mathbf{k}_{\cdot u}\mathbf{k}_{v\cdot}$ is nonneg\-ative by Proposition~4.1, so $\alpha/(1-\alpha K_G(v,u))>0$, giving $1-\alpha K_G(v,u)>0$.
	
	Finally, we have
	\begin{align*}
		\Delta s(a \mid u \to v)
		&= \mathbf{1}^\top (K_{G'} - K_G) \mathbf{e}_a \\
		&= \frac{\alpha (\mathbf{1}^\top \mathbf{k}_{\cdot u})
			(\mathbf{k}_{v \cdot} \mathbf{e}_a)}{1 - \alpha K_G(v,u)} \\
		&= \frac{\alpha \, s_G(u) \, K_G(v,a)}{1 - \alpha K_G(v,u)}.
	\end{align*}
\end{proof}

\begin{proof}[Proof of \Cref{prop:factorization}]
	Apply Equation~(3) to each term in Definition~4.3 and factor.
	$\lambda_G(u,v)>0$ since $\alpha>0$, $s_G(u)>0$, and $1-\alpha K_G(v,u)>0$ under the Katz-validity assumption.
\end{proof}

\begin{proof}[Proof of \Cref{prop:maxincrease}]
	The two identities follow directly from the single-edge Katz update formula
	with head node \(a\). Katz validity implies nonnegativity of the kernel,
	\(K_G(a,a)\ge 1\), \(K_G(a,u)\ge 0\), and
	\(1-\alpha K_G(a,u)>0\). Hence
	$
	\frac{K_G(a,a)}{1-\alpha K_G(a,u)}\ge 1,
	$
	so the proxy does not overestimate the direct gain of \(a\). The sign of the
	differential gain follows from the second identity because
	\(\widehat g_G(u,a)>0\) and the denominator is positive.
\end{proof}

\section{Population-Proportional Targets}

Katz methods use the Katz ranking, whereas the PageRank-oriented methods use
their resulting PageRank ranking; consequently, their initial top-$k$ counts
can differ. PageRank-oriented methods do not add
edges, so their edit counts are marked ``--.'' OOT denotes the one-hour limit,
and OOM denotes termination due to insufficient memory. Penn \textsc{GapGreedy}
was stopped after $2{,}438$ seconds, and Pokec \textsc{GapGreedy} was not started
after three consecutive one-hour timeouts; both are marked OOT.

\begin{table*}[t]
  \centering
  \caption{Population-proportional target results on Blogs.}
  \label{tab:population-targets-blogs}
  \scriptsize
  \resizebox{\textwidth}{!}{  \begin{tabular}{rlrrrrrrr}
    \toprule
    \textbf{Target} & \textbf{Algorithm} &
    \textbf{Edits} & \textbf{Initial $|T_k\cap V_r|$} &
    \textbf{Final $|T_k\cap V_r|$} & \textbf{Success} &
    $\boldsymbol{\Unfair(T_k)}$ & \textbf{Overlap@100} & \textbf{Time (s)} \\
    \midrule
    52 & \textsc{Blade} & 32 & 48 & 52 & Yes & 0.0000 & 0.96 & 0.004 \\
    52 & \textsc{SameGroup} & 1596 & 48 & 52 & Yes & 0.0000 & 0.93 & 0.224 \\
    52 & \textsc{KatzMass} & 820 & 48 & 49 & No & 0.0009 & 0.99 & 0.011 \\
    52 & \textsc{GapGreedy} & 36 & 48 & 52 & Yes & 0.0000 & 0.96 & 12.245 \\
    52 & \textsc{FairGD} & -- & 48 & 50 & No & 0.0004 & 0.98 & 0.170 \\
    52 & \textsc{FairWalk} & -- & 48 & 47 & No & 0.0025 & 0.86 & 0.000 \\
    52 & \textsc{CrossWalk} & -- & 48 & 47 & No & 0.0025 & 0.68 & 0.009 \\
    52 & \textsc{LFPR-N} & -- & 48 & 56 & No & 0.0016 & 0.80 & 0.018 \\
    52 & \textsc{LFPR-U} & -- & 48 & 54 & No & 0.0004 & 0.90 & 0.037 \\
    \bottomrule
  \end{tabular}}
\end{table*}

\begin{table*}[t]
  \centering
  \caption{Population-proportional target results on Hopkins.}
  \label{tab:population-targets-hopkins}
  \scriptsize
  \resizebox{\textwidth}{!}{  \begin{tabular}{rlrrrrrrr}
    \toprule
    \textbf{Target} & \textbf{Algorithm} &
    \textbf{Edits} & \textbf{Initial $|T_k\cap V_r|$} &
    \textbf{Final $|T_k\cap V_r|$} & \textbf{Success} &
    $\boldsymbol{\Unfair(T_k)}$ & \textbf{Overlap@100} & \textbf{Time (s)} \\
    \midrule
    55 & \textsc{Blade} & 228 & 44 & 55 & Yes & 0.0000 & 0.89 & 0.339 \\
    55 & \textsc{SameGroup} & 2998 & 44 & 55 & Yes & 0.0000 & 0.89 & 4.887 \\
    55 & \textsc{KatzMass} & 83402 & 44 & 55 & Yes & 0.0000 & 0.89 & 1.062 \\
    55 & \textsc{GapGreedy} & OOT & 44 & OOT & OOT & OOT & OOT & OOT \\
    55 & \textsc{FairGD} & -- & 53 & 56 & No & 0.0001 & 0.97 & 1.745 \\
    55 & \textsc{FairWalk} & -- & 53 & 58 & No & 0.0009 & 0.93 & 0.002 \\
    55 & \textsc{CrossWalk} & -- & 53 & 60 & No & 0.0025 & 0.70 & 0.022 \\
    55 & \textsc{LFPR-N} & -- & 53 & 60 & No & 0.0025 & 0.87 & 0.110 \\
    55 & \textsc{LFPR-U} & -- & 53 & 55 & Yes & 0.0000 & 0.89 & 0.948 \\
    \bottomrule
  \end{tabular}}
\end{table*}

\begin{table*}[t]
  \centering
  \caption{Population-proportional target results on Retweet.}
  \label{tab:population-targets-retweet}
  \scriptsize
  \resizebox{\textwidth}{!}{  \begin{tabular}{rlrrrrrrr}
    \toprule
    \textbf{Target} & \textbf{Algorithm} &
    \textbf{Edits} & \textbf{Initial $|T_k\cap V_r|$} &
    \textbf{Final $|T_k\cap V_r|$} & \textbf{Success} &
    $\boldsymbol{\Unfair(T_k)}$ & \textbf{Overlap@100} & \textbf{Time (s)} \\
    \midrule
    61 & \textsc{Blade} & 516 & 34 & 61 & Yes & 0.0000 & 0.73 & 1.093 \\
    61 & \textsc{SameGroup} & 4562 & 34 & 61 & Yes & 0.0000 & 0.71 & 0.819 \\
    61 & \textsc{KatzMass} & 100940 & 34 & 20 & No & 0.1681 & 0.85 & 6.045 \\
    61 & \textsc{GapGreedy} & OOT & 34 & OOT & OOT & OOT & OOT & OOT \\
    61 & \textsc{FairGD} & -- & 48 & 48 & No & 0.0169 & 0.99 & 0.743 \\
    61 & \textsc{FairWalk} & -- & 48 & 57 & No & 0.0016 & 0.90 & 0.001 \\
    61 & \textsc{CrossWalk} & -- & 48 & 52 & No & 0.0081 & 0.33 & 0.011 \\
    61 & \textsc{LFPR-N} & -- & 48 & 74 & No & 0.0169 & 0.72 & 8.349 \\
    61 & \textsc{LFPR-U} & -- & 48 & 70 & No & 0.0081 & 0.77 & 9.906 \\
    \bottomrule
  \end{tabular}}
\end{table*}

\begin{table*}[t]
  \centering
  \caption{Population-proportional target results on Deezer.}
  \label{tab:population-targets-deezer}
  \scriptsize
  \resizebox{\textwidth}{!}{  \begin{tabular}{rlrrrrrrr}
    \toprule
    \textbf{Target} & \textbf{Algorithm} &
    \textbf{Edits} & \textbf{Initial $|T_k\cap V_r|$} &
    \textbf{Final $|T_k\cap V_r|$} & \textbf{Success} &
    $\boldsymbol{\Unfair(T_k)}$ & \textbf{Overlap@100} & \textbf{Time (s)} \\
    \midrule
    44 & \textsc{Blade} & 117 & 26 & 44 & Yes & 0.0000 & 0.82 & 0.606 \\
    44 & \textsc{SameGroup} & 1562 & 26 & 44 & Yes & 0.0000 & 0.77 & 5.842 \\
    44 & \textsc{KatzMass} & 30318 & 26 & 23 & No & 0.0441 & 0.97 & 1.197 \\
    44 & \textsc{GapGreedy} & OOT & 26 & OOT & OOT & OOT & OOT & OOT \\
    44 & \textsc{FairGD} & -- & 34 & 34 & No & 0.0100 & 1.00 & 1.342 \\
    44 & \textsc{FairWalk} & -- & 34 & 30 & No & 0.0196 & 0.94 & 0.002 \\
    44 & \textsc{CrossWalk} & -- & 34 & 41 & No & 0.0009 & 0.67 & 0.029 \\
    44 & \textsc{LFPR-N} & -- & 34 & 34 & No & 0.0100 & 0.68 & 10.394 \\
    44 & \textsc{LFPR-U} & -- & 34 & 35 & No & 0.0081 & 0.74 & 25.355 \\
    \bottomrule
  \end{tabular}}
\end{table*}

\begin{table*}[t]
  \centering
  \caption{Population-proportional target results on Penn.}
  \label{tab:population-targets-penn}
  \scriptsize
  \resizebox{\textwidth}{!}{  \begin{tabular}{rlrrrrrrr}
    \toprule
    \textbf{Target} & \textbf{Algorithm} &
    \textbf{Edits} & \textbf{Initial $|T_k\cap V_r|$} &
    \textbf{Final $|T_k\cap V_r|$} & \textbf{Success} &
    $\boldsymbol{\Unfair(T_k)}$ & \textbf{Overlap@100} & \textbf{Time (s)} \\
    \midrule
    48 & \textsc{Blade} & 3666 & 20 & 48 & Yes & 0.0000 & 0.72 & 23.366 \\
    48 & \textsc{SameGroup} & 16388 & 20 & 45 & No & 0.0009 & 0.75 & 82.865 \\
    48 & \textsc{KatzMass} & 1400650 & 20 & 35 & No & 0.0169 & 0.85 & 26.565 \\
    48 & \textsc{GapGreedy} & OOT & 20 & OOT & OOT & OOT & OOT & OOT \\
    48 & \textsc{FairGD} & -- & 58 & 58 & No & 0.0100 & 1.00 & 50.198 \\
    48 & \textsc{FairWalk} & -- & 58 & 53 & No & 0.0025 & 0.93 & 0.023 \\
    48 & \textsc{CrossWalk} & -- & 58 & 47 & No & 0.0001 & 0.68 & 0.237 \\
    48 & \textsc{LFPR-N} & -- & 58 & 45 & No & 0.0009 & 0.72 & 6.236 \\
    48 & \textsc{LFPR-U} & -- & 58 & 48 & Yes & 0.0000 & 0.75 & 70.913 \\
    \bottomrule
  \end{tabular}}
\end{table*}

\begin{table*}[t]
  \centering
  \caption{Population-proportional target results on Pokec.}
  \label{tab:population-targets-pokec}
  \scriptsize
  \resizebox{\textwidth}{!}{  \begin{tabular}{rlrrrrrrr}
    \toprule
    \textbf{Target} & \textbf{Algorithm} &
    \textbf{Edits} & \textbf{Initial $|T_k\cap V_r|$} &
    \textbf{Final $|T_k\cap V_r|$} & \textbf{Success} &
    $\boldsymbol{\Unfair(T_k)}$ & \textbf{Overlap@100} & \textbf{Time (s)} \\
    \midrule
    49 & \textsc{Blade} & 512 & 38 & 49 & Yes & 0.0000 & 0.89 & 84.532 \\
    49 & \textsc{SameGroup} & 2559 & 38 & 49 & Yes & 0.0000 & 0.89 & 10.324 \\
    49 & \textsc{KatzMass} & OOT & 38 & OOT & OOT & OOT & OOT & OOT \\
    49 & \textsc{GapGreedy} & OOT & 38 & OOT & OOT & OOT & OOT & OOT \\
    49 & \textsc{FairGD} & -- & 53 & 53 & No & 0.0016 & 1.00 & 1889.701 \\
    49 & \textsc{FairWalk} & -- & 53 & 60 & No & 0.0121 & 0.87 & 0.844 \\
    49 & \textsc{CrossWalk} & -- & 53 & 58 & No & 0.0081 & 0.85 & 232.533 \\
    49 & \textsc{LFPR-N} & OOM & 53 & OOM & OOM & OOM & OOM & OOM \\
    49 & \textsc{LFPR-U} & OOM & 53 & OOM & OOM & OOM & OOM & OOM \\
    \bottomrule
  \end{tabular}}
\end{table*}

\end{document}